\newtheorem{theorem}{Theorem}
\newtheorem{corollary}{Corollary}
\newtheorem{remark}{Remark}
\newtheorem{definition}{Definition}
\newtheorem{algo}{Algorithm}
\newtheorem{lemma}{Lemma}
\providecommand{\R}{}
\renewcommand{\R}{\mathds{R}}
\newcommand\cS{{\mathcal S}}
\newcommand\cN{{\mathcal N}}
\newcommand\cM{{\mathcal M}}
\newcommand\cH{{\mathcal H}}
\newcommand\cO{{\mathcal O}}
\newcommand{\tends}{\rightarrow}
\newcommand{\bran}[1]{\left\{#1\right\}}
\newcommand{\ind}{\mathds{1}}
\newcommand{\PPP}{{\text{PPP}}}
\renewcommand{\d}{\mathrm d}
\newcommand{\expit}{{\text{expit}}}
\newcommand{\mat}[1]{\boldsymbol{\mathrm{#1}}}
\newcommand{\GP}{{\text{GP}}}
\newcommand{\iif}{\Leftrightarrow}
\newcommand{\indep}{\ensuremath{\stackrel{\text{ind.}}{\sim}}}
\DeclareMathOperator*{\argmax}{arg\,max}
\title{Bayesian Analysis of Sigmoidal Gaussian Cox Processes via Data Augmentation
}
\author{
  Renaud Alie, David A. Stephens \\
  Department of Mathematics and Statistics \\
  McGill University \\
  Montreal\\
   \And
  Alexandra M. Schmidt \\
  Department of Epidemiology, Biostatistics and Occupational Health\\
  McGill University \\
  Montreal\\
}
\begin{document}
\maketitle

\begin{abstract}
Many models for point process data are defined through a thinning procedure where locations of a base process (often Poisson) are either kept (observed) or discarded (thinned). In this paper, we go back to the fundamentals of the distribution theory for point processes to establish a link between the base thinning mechanism and the joint density of thinned and observed locations in any of such models. In practice, the marginal model of observed points is often intractable, but thinned locations can be instantiated from their conditional distribution and typical data augmentation schemes can be employed to circumvent this problem. Such approaches have been employed in the recent literature, but some inconsistencies have been introduced across the different publications. We concentrate on an example: the so-called sigmoidal Gaussian Cox process. We apply our approach to resolve contradicting viewpoints in the data augmentation step of the inference procedures therein. We also provide a multitype extension to this process and conduct Bayesian inference on data consisting of positions of two different species of trees in Lansing Woods, Michigan. The emphasis is put on intertype dependence modeling with Bayesian uncertainty quantification.
\end{abstract}


\section{Introduction} \label{Intro}

Spatial point processes describe the random behavior of point configurations in space. Several real-life phenomena have motivated the development of point process theory such as the position of stars in a galaxy \citep{babu1996spatial}, the geographical position of trees \citep{wolpert1998poisson} and the space-time locations of earthquakes \citep{ogata1998space}. The simplest model for point configurations is the Poisson point process (PPP) which is typically characterized through properties of its count function \citep{kingman1992poisson,moller2003statistical}. PPPs come in finite and non-finite varieties, but only the finite case admits a constructive definition and a density. Cox models \citep{cox1955some} are hierarchical Poisson models with a stochastic intensity function. Even in the finite case, their density is usually not available in closed form. On the other hand, finite Markov point processes are described by properties of their density function \citep{moller2003statistical,van2019theory}, which are traditionally taken with respect to the distribution of a PPP with unit intensity.

Other point process models are defined through a generative thinning procedure where only a subset of a base point process is observed. Examples include the three types of Mat\'ern processes \citep{matern1960spatial} where a deterministic thinning rule is applied to a base PPP. Mat\'ern thinning prevents two points from lying within a predetermined distance of one another. Another important example is the thinning procedure of \citet{lewis1979simulation} to generate non-homogeneous PPPs. This one is probabilistic and operates independently across locations of the base process. It is usually the case that the marginal density of observed locations is not available in tractable form since one would need to integrate out the base point process along with the thinning procedure over every compatible configuration.

We focus on a Bayesian formulation of statistical inference for point process models, but some of the results we present also apply to other approaches where data augmentation is involved. Consider an observed point process $Y$ which we postulate to be distributed according to a density $f(Y|\theta)$ (wrt to some measure $\mu$) indexed by a finite-dimensional parameter $\theta$. We assign a prior density $\pi(\theta)$ (wrt to the Lebesgue measure $\lambda$) to the model parameters. This is a typical Bayesian specification where $f(Y|\theta)\pi(\theta)$ is understood as the joint density of $(Y,\theta)$ wrt to the product of $\mu$ and $\lambda$ measures. In this context, one can use the observed $y$ and interpret $f(y|\theta)\pi(\theta)$ as, up to a normalizing constant, the density wrt to $\lambda$ of the regular probability distribution \citep[Section 4.1.3]{durrett2019probability} of $\theta$ given $Y$.  The situation is trickier, however, when considering data augmentation to circumvent some intractability in the model $f(Y|\theta)$. In such a case, a joint density $f(X,Y|\theta)$ is needed along with the prior, where $Y$ represents the observed data and $X$ the unobserved quantity.  This joint model needs to be valid in the sense that it is a density wrt to some product measure $\nu \times \mu$ over the product of $X$ and $Y$ spaces. Proportionality arguments can then be used to interpret $f(X|Y,\theta) \propto f(X,Y|\theta)\pi(\theta)$ as the full conditional density (wrt to $\nu$) of $X$ given $Y$ and $\theta$. This fact is of direct importance in many computational approaches to the Bayesian analysis of point process data and models, which often utilize augmentation as a strategy.


We go back to the fundamentals of probability theory for point processes to resolve this obstacle and provide a construction of the posterior distribution that can be used in a general setting.  Considering everything else fixed, any statement about the conditional distribution of thinned locations $X$ given the observed $Y$ should be obtainable from their joint density, provided they admit such. We present a colouring theorem that characterizes the joint density of thinned and observed locations for any thinning procedure applied to any point process that admits a density.  This paper focuses on delicate aspects concerning computations for one class of point process models, the sigmoidal Gaussian Cox process (SGCP) \citep{adams2009tractable}. We also examine, in Appendix \ref{MT3sect}, alternate derivations for some important results from the literature about the Mat\'ern type III process.  We illustrate how one specific result, the colouring theorem, streamlines most of the measure-theoretic details involved. Moreover, it provides a robust and unequivocal framework to state and verify any claim about joint, marginal and conditional distributions in models based on thinning procedures.  The colouring theorem is not limited to binary categories (thinned and observed). It can be employed to derive the joint density of any number of point processes based on a categorical marking of a base process. This provides a powerful tool to construct multitype point processes with interesting dependence structure while retaining a tractable density. We use this result to provide a multitype extension of the SGCP along with the appropriate inference scheme that preserves the advantages of the original model: it does not involve the discretization or truncation approximations that are common to similar methods. We show how our multitype model and algorithms can be useful in investigating the interaction between different species of trees \citep{gerrard1969new} from Lansing, Michigan: see Figure \ref{fig:mapHick}.
\begin{figure}[t!]
    \centering
    \includegraphics[height = 7cm]{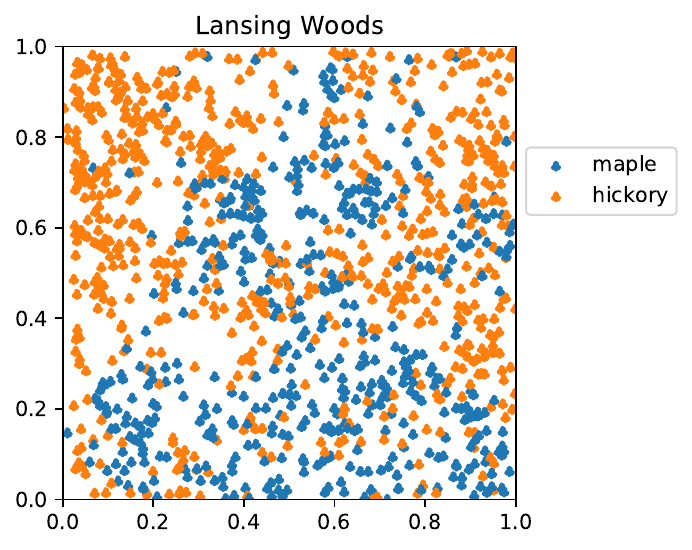}
    \caption{The location of maples and hickories in Lansing Woods, Michigan.}
    \label{fig:mapHick}
\end{figure}


In the last decade or so, there has been a growing literature on non-homogeneous PPP modeling. Originally, \citet{adams2009tractable} proposed a data augmentation scheme to conduct inference about the intensity function of an observed point process $Y$. While the marginal PPP likelihood involves the integral of the intensity function, the complete data joint density $f(X,Y|\theta)$ of thinned ($X$) and observed ($Y$) locations is itself tractable. Imaginary thinned locations $X$ are instantiated at each step of the Markov chain Monte Carlo (MCMC) algorithm according to their full conditional distribution in a Metropolis-Hastings type of algorithm using birth-death-move proposals. A space-time extension of this model is proposed in \citet{gonccalves2018exact}. The authors derive an alternative way to perform the data augmentation step by simulating the thinned locations exactly from their full conditional. As noted in their paper, their approach is incompatible with that of \citet{adams2009tractable}. In \citet{rao2017bayesian}, the authors use the same type of data augmentation but in the case of Mat\'ern type III thinning of a non-homogeneous PPP. At the first step of the hierarchy is the sigmoidal Gaussian Cox process (SGCP) model of \citet{adams2009tractable}. The authors present yet another new procedure to instantiate the thinned point process from its exact full conditional. Even accounting for all the particularities associated with each model, the data augmentation step to be performed is the same in each of \citet{adams2009tractable}, \citet{rao2017bayesian} and \citet{gonccalves2018exact}. However, it can be shown that they each simulate from a different distribution. It is worth noting that the approach of \citet{gonccalves2018exact} has since been amended in a corrigendum \citep{gonccalves2023corrigendum}. The authors adopt a new update strategy for the thinned point process, this one based on a retrospective sampling argument. Under a new united formalism, our work provides a critical assessment of the four methods mentioned above.

Other comparable models based on latent random measures include the Poisson-gamma model of \citet{wolpert1998poisson} which has been extended to multitype modeling in \citet{kang2014bayesian}. The other notable example is the method of \citet{kottas2007bayesian} in which the intensity measure is described by a scaled Dirichlet process mixture.  This one has, hitherto, not been generalized to joint modeling of point patterns. Both methods include partial sum approximations of infinite mixtures. In contrast, our multitype adaptation of the SGCP preserves the advantages of the unitype version: it is exact up to Monte Carlo error. In the next section, we start by reviewing some basic notions of point process distribution theory. We describe and justify the type of dominating measure we shall use throughout as it will be necessary to discuss point processes in general spaces.

\section{Point process modelling} \label{ppsect}

\subsection{Point Process Densities}\label{Dens}

The random locations of a point process take their values in some space $\cS$. Usually, $\cS$ corresponds to a real-life physical space: we have that $\cS$ is a bounded subset of $\R^d$ with $d=1,2$ or $3$. Whenever $d>1$, the locations exhibit no natural ordering. In this sense, it is usual to describe a point process $X$ as a random countable subset of $\cS$ \citep{moller2003statistical}.  We will concentrate on \textit{finite} point processes (FPP) that have a finite number of locations with probability one. Therefore, we can assume that $X$ is a random finite subset of $\cS$ taking its values in
$
\cN_{\text{f}}(\cS)= \bran{S \subset \cS: N(S) < \infty},
$
where we use $N(S)$ to denote the cardinality of a set $S$.


PPPs are parametrized by a positive intensity function $\lambda:\cS \tends [0,\infty)$. We write $X\sim \PPP(\cS,\lambda(\cdot))$. In the finite case, we can construct such a process as follows:
\begin{enumerate}
    \item The total number of points $N(X)$ has the Poisson($\Lambda(\cS)$) distribution, where $\Lambda(\cS) = \int_\cS\lambda(s) \d s$.
    \item Conditional on $N(X)$, the points are independently scattered over $\cS$ according to the probability density $\lambda(\cdot)/\Lambda(\cS)$.
\end{enumerate}
The process outlined above assumes the integral of $\lambda(\cdot)$ over $\cS$ is finite. For a measurable subset $F$ of $\cN_{\text{f}}(\cS)$, we can write the distribution of $X$ as
\begin{align}
   P(X \in F) = \sum_{n \geq 0} \frac{\exp(-\Lambda(\cS))}{n!}\int_{\cS^n} \ind_F(\{x_1,x_2,\dots,x_n\}) \prod_{i=1}^n \lambda(x_i) \d x_1 \d x_2 \dots \d x_n. \label{distPois}
\end{align}
For completeness, the integral over $\cS^0$ should be understood as evaluating the inside function at the empty set $\emptyset$.

Applying the monotone class theorem \citep[Section 5.2]{durrett2019probability}, we can extend the previous result from indicator functions to any positive and measurable function $h:\cN_{\text{f}}(\cS) \tends [0,\infty)$:
\begin{align}
E[h(X)] = \sum_{n \geq 0} \frac{\exp(-\Lambda(\cS))}{n!}\int_{\cS^n} h(\{x_1,x_2,\dots,x_n\}) \prod_{i=1}^n \lambda(x_i) \d x_1 \d x_2 \dots \d x_n. \label{expPois}
\end{align}

The unit rate PPP is defined by taking $\lambda(\cdot)$ to be identically 1 everywhere on $\cS$. If we define $Y\sim \PPP(\cS,1)$, then $Y$ is finite iff the space $\cS$ is bounded ($|\cS| < \infty$). In such a case, we can derive the density of $X$ with respect to the distribution of $Y$ as
$$
f(\{x_1,x_2,\dots,x_n\}) = \exp\Big(|\cS| - \Lambda(\cS) \Big)\prod_{i=1}^n \lambda(x_i)
$$
by careful manipulations of equations \eqref{distPois} and \eqref{expPois}.

Taking the density of a point process with respect to the $\PPP(\cS,1)$ distribution is a common approach in the literature. We argue in the following that being restricted to a bounded domain $\cS$ can easily become hindering especially when considering marked point processes.

\subsection{Counting-Scattering Measure}\label{csDens}

Taking the density of a FPP wrt to the unit rate PPP is analogous to taking the Radon-Nikodym derivative of an absolutely continuous distribution with respect to the uniform distribution, which is only valid if the support is bounded. In the unbounded $\cS$ case, one could define the density wrt to another Poisson process with intensity $\rho(\cdot)$ chosen so that it is finite, i.e. $\int_\cS \rho(s)\d s < \infty$. This is analogous to taking the derivative of a continuous distribution with respect to an arbitrary distribution.  However, the density of a FPP need not necessarily be taken wrt to a probability distribution. In the case of continuous random variables, we generally understand the term {\it density} as the derivative wrt to the Lebesgue measure which is not a finite measure. Upon inspecting equation \eqref{distPois}, we can see that the distribution of any finite PPP (regardless of whether $\cS$ is bounded or not) is absolutely continuous wrt to the $\sigma$-finite measure
\begin{align}
   \mu_{\text f}(F) =   \sum_{n \geq 0} \int_{\cS^n} \ind_F(\{x_1,x_2,\dots,x_n\})  \d x_1 \d x_2 \dots \d x_n. \label{CSmeas}
\end{align}

Moreover, any FPP defined as in \citet[Section 5.3]{daley2003introduction} from a probability mass function $p_n$ ($\sum_{n \geq 0} p_n = 1$) on the total number of points and a family of symmetric densities $\{\pi_n(\cdot), n\geq 1\}$ (each one over $\cS^n$) has density
\begin{align}
    f(\{x_1,x_2,\dots,x_n\}) = p_n \pi_n(\{x_1,x_2,\dots,x_n\}) \label{FPPdens}
\end{align}
wrt $\mu_{\text f}$. We refer to the process of assigning locations conditional on the number of points as \textit{scattering} and call $\mu_{\text f}$ the \textit{counting-scattering} measure. The density wrt to $\mu_{\text f}$ is arguably more informative about the process than its counterpart taken wrt to the unit PPP because it relates directly to a simulation procedure. The density wrt \eqref{CSmeas} of the $\PPP(\cS,\lambda(\cdot))$ distribution has the form
\begin{align}
  f(\{x_1,x_2,\dots,x_n\}) = \frac{\exp(-\Lambda(\cS))}{n!} \prod_{i=1}^n \lambda(x_i), \label{PPPCSdens}
\end{align}
provided it is finite.

The approach outlined above is inspired by Janossy measures \citep{janossy1950absorption} and how they describe point process distributions. More recently, both \citet{rao2017bayesian} and \citet{zhang2017independence} defined a measure akin to \eqref{CSmeas}. Distinctively though, both papers consider FPPs to be ordered, albeit arbitrarily so. There is nothing inherently contradictory with this approach, but considering FPPs as sets is more in line with the established theory. Finally, the counting-scattering measure approach is also more easily adapted to more general spaces $\cS$. The measure $\mu_{\text f}$ as expressed in \eqref{CSmeas} is a hybrid between the counting and Lebesgue measures, but the scattering part of the distribution could use any $\sigma$-finite measure over a space $\cS$ that may not be Euclidian. This will be useful in the next section when we consider FPPs with marks in general spaces.

\subsection{Markings}\label{Marks}

Marks, unlike locations, do not generally lie in some physical space and therefore are not necessarily restricted to bounded spaces. Moreover, one could easily envision marks supported on a non-Euclidian space such as one with a discrete component. The unit rate PPP can be defined on more general spaces than subsets of $\R^d$, but this is not common in the statistical literature.  The counting-scattering measure is readily adapted to general spaces, however.

For finite point processes, we generally consider locations to precede marks. The latter are scattered conditionally on the former through a family of models
\begin{align}
    \pi_n(m_1,\dots,m_n|s_1,\dots,s_n),\quad \text{for } n=1,2,3,\dots \label{condScat}
\end{align}
The conditional densities above should be invariant under permutations of indices $\{1,2,\dots,n\}$ to preserve the natural symmetry of point processes.  Fundamentally, the marks of a finite point process are themselves a FPP: a random number of locations in some mark space $\cM$. For this reason, it is usual to consider the set of locations with corresponding marks as a single point process over the product space $\cS \times \cM$. Along with location scattering distributions, the conditional densities in \eqref{condScat} specify the scattering models of this augmented point process:
\begin{align}
\pi_n(\{(s_1,m_1),\dots,(s_n,m_n)\}) = \pi_n(m_1,\dots,m_n|s_1,\dots,s_n) \pi_n(s_1,\dots,s_n) \label{augScat},
\end{align}
for any $n\geq 1$. From this point on, we assume the densities described in $\eqref{augScat}$ to be respectively taken wrt to the product measure $(\mu_\cS\times \mu_\cM)^n$ over measurable subsets of $(\cS \times \cM)^n$, where $\mu_\cS$ (resp. $\mu_\cM$) is the dominating measure over the location (resp. mark) space.

As indicated above, there is no real distinction between points and marks in the product space representation. The two are equally important components of the same specification: a random finite subset of the product space $\cS \times \cM$. As indicated in Section \ref{csDens}, we can write the density of this FPP as $p_n \pi_n(\{(s_1,m_1),\dots,(s_n,m_n)\})$ without specifying whether a standard rate PPP is finite or even defined on such a space. This density is taken wrt to the counting-scattering measure defined by
\begin{align*}
\mu_{\text f} (F) = \sum_{n\geq 0} \int_{(\cS \times \cM)^n} \ind_F(\{(s_1,m_1),\dots,(s_n,m_n)\}) \mu_\cS(\d s_1)\mu_\cM(\d m_1) \dots \mu_\cS(\d s_n)\mu_\cM(\d m_n)
\end{align*}
for any measurable subset $F$ of $N_{\text f}(\cS \times \cM)$. How to construct a measure space over $N_{\text f}(\cS \times \cM)$ is thoroughly discussed in \citet{zhang2017independence}. Essentially, one can define it from the measure spaces of $\mu_\cS$ and $\mu_\cM$ which are assumed to be provided. Marked point processes are akin to regular point processes except that they can take their values in general spaces rather than being confined to bounded subsets of $\R^d$ where we typically observe locations. For this reason, it is helpful to avoid the restrictions associated with using the unit rate PPP as a dominating measure.

\subsection{Colourings}\label{Col}

We define a colouring as a particular type of marking: one with a discrete and finite mark space $\cM = \{0,1,2,\dots,K\}$. Although it is essentially the simplest type of marking one could imagine, colouring procedures are an interesting tool to construct multitype point processes. Indeed, by considering each colour as its own point configuration, we can study the joint, marginal and conditional distributions among the various types. A thinning procedure is a colouring with mark space $\cM =$ \{thinned, observed\} $\equiv\{0,1\}$. Examples of thinnings include the procedure of \citet{lewis1979simulation} to simulate non-homogeneous PPPs or the three types of Mat\'ern repulsive processes \citep{matern1960spatial}.

Next, we present an important conceptual result that we term a colouring theorem in honor of the homonymous result in \citet{kingman1992poisson}. The principle of a colouring theorem is to describe the marginal and joint structure of a multitype point process created by splitting a base point process into colours. The original version tells us that if the points of a PPP are coloured independently of one another, then the point processes corresponding to each type are also Poisson and they are mutually independent. The result below is more general in that it applies to any FPP with a density and arbitrary colouring mechanisms.

\begin{theorem}[Colouring theorem]\label{colThm}
Consider a FPP over the product space $\cS \times \{0,1,\dots,K\}$ with density
$$f(\{(s_1,c_1),\dots,(s_n,c_n)\}) = p_n \pi_n(\{(s_1,c_1),\dots,(s_n,c_n)\}).$$
The dominating measure over the discrete mark space is the counting measure, hence we have
$$
\sum_{n\geq 0} \int_{\cS^n}\sum_{\mat c \in \{0,1,\dots,K\}^n} f(\{(s_1,c_1),\dots,(s_n,c_n)\}) \mu(\d s_1) \dots \mu(\d s_n) = 1
$$
for some arbitrary measure $\mu$ over subsets of $\cS$. Let $\mu_{\mathrm{f}}$ denote the counting-scattering measure corresponding to $\mu$ for FPPs over $\cS$, that is
$$
\mu_{\mathrm{f}}(F) = \sum_{n \geq 0} \int_{\cS^n} \ind_F (\{s_1,\dots,s_n\})\mu(\d s_1) \dots \mu(\d s_n), \quad \text{for measurable } F \subseteq N_{\mathrm{f}}(\cS).
$$
Define the point processes $X_0,X_1,\dots,X_K$ as the locations in $\cS$ with respective colours in $\{0,1,\dots,K\}$. Let $S_0,S_1,\dots,S_K$ be finite subsets of $\cS$ with respective sizes $n_0,n_1,\dots,n_K$. Then the joint density of $X_0,X_1,\dots,X_K$ wrt to the product measure $\mu_{\mathrm{f}}^{K+1}$ has the form
\begin{align}
    f(S_0,S_1,\dots,S_K) = \binom{n}{n_0,n_1,\dots,n_K} f\bigg(\bigcup_{k=0}^K S_k\times \{k\}\bigg) \label{combTerm}
\end{align}
where $n=\sum_{k=0}^Kn_k$.
\end{theorem}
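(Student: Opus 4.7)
The plan is to identify the joint density by computing expectations of product test functions in two different ways and matching the resulting integrands. Concretely, let $Z$ denote the coloured FPP on $\cS \times \{0,1,\dots,K\}$ with density $f$ (wrt the counting-scattering measure built from $\mu$ and counting measure on colours), and consider test functions of the form $H(Z) = \prod_{k=0}^K h_k(X_k)$ where $X_k$ denotes the set of locations of colour $k$. Any statement about the joint density of $(X_0,\dots,X_K)$ wrt $\mu_{\mathrm{f}}^{K+1}$ can be read off by verifying that
$$
E[H(Z)] = \sum_{n_0,\dots,n_K \geq 0} \int_{\cS^{n_0}}\!\!\cdots\!\int_{\cS^{n_K}} \Big(\prod_{k} h_k(S_k)\Big)\, g(S_0,\dots,S_K)\, \prod_{k,i}\mu(\d s^{(k)}_i),
$$
where $S_k = \{s^{(k)}_1,\dots,s^{(k)}_{n_k}\}$ and $g$ is the unknown joint density.

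First I would expand $E[H(Z)]$ directly via the definition of the density against the counting-scattering measure on the product space: this produces a sum over $n \geq 0$, an integral of $(s_1,\dots,s_n)$ over $\cS^n$, and an inner sum over the colour vector $(c_1,\dots,c_n) \in \{0,\dots,K\}^n$, with integrand $\prod_k h_k(\{s_i : c_i = k\}) \cdot f(\{(s_i,c_i)\}_i)$. Next I would regroup the sum over $(c_1,\dots,c_n)$ by the vector of colour counts $(n_0,\dots,n_K)$ with $\sum n_k = n$. For each such vector there are exactly $\binom{n}{n_0,\dots,n_K}$ colour assignments, and because both $f$ (being a function of sets) and the product measure $\mu^n$ are symmetric under permutations of indices, each of these assignments contributes the same value to the integral as the canonical assignment that sends the first $n_0$ indices to colour $0$, the next $n_1$ to colour $1$, and so on.

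After collapsing the inner sum, each surviving term for a fixed $(n_0,\dots,n_K)$ is an integral over $\cS^{n_0}\times\cdots\times\cS^{n_K}$ of $\prod_k h_k(S_k) \cdot f\bigl(\bigcup_k S_k \times\{k\}\bigr)$ multiplied by $\binom{n}{n_0,\dots,n_K}$. Matching this expression term by term with the candidate right-hand side above identifies
$$
g(S_0,\dots,S_K) = \binom{n}{n_0,\dots,n_K} f\bigg(\bigcup_{k=0}^K S_k\times\{k\}\bigg),
$$
with $n = \sum_k n_k$. Because the product test functions $\prod_k h_k$ generate a $\pi$-system determining the joint law on $N_{\mathrm{f}}(\cS)^{K+1}$, this identification pins down the joint density $\mu_{\mathrm{f}}^{K+1}$-a.e., as required.

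The main obstacle I expect is bookkeeping rather than any deep inequality: one has to carefully distinguish the unordered set $\{(s_i,c_i)\}$ on which $f$ is defined from the ordered tuple $(s_1,\dots,s_n)$ over which the counting-scattering integral is taken, and then justify cleanly that the multinomial factor arises precisely from the number of ordered tuples of colour labels compatible with a given partition into coloured sub-configurations. Exploiting that the density is a function of sets (and so permutation-invariant) is what makes this collapse rigorous and removes any need to introduce an artificial ordering on the points, in contrast with the ordered-tuple formulations of \citet{rao2017bayesian} and \citet{zhang2017independence}.
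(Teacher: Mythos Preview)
Your proposal is correct and follows essentially the same approach as the paper: both arguments expand the expectation against the counting-scattering measure, regroup the sum over colour assignments by the vector of counts $(n_0,\dots,n_K)$, use permutation symmetry of the set-valued density and of $\mu^n$ to collapse the $\binom{n}{n_0,\dots,n_K}$ equivalent terms, and then appeal to a $\pi$-system argument to pin down the density. The only cosmetic difference is packaging: the paper isolates the symmetry-and-regrouping step into a separate lemma phrased in terms of partitioning the ambient space (then applied with $\cS\times\{k\}$ as the pieces), whereas you carry out the same collapse inline directly on the colour labels; and the paper works with indicator functions of rectangle sets while you use general product test functions, which amounts to the same thing after a monotone-class step.
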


The proof can be found in Appendix \ref{proColThm}. In the finite case, the original colouring theorem for PPPs is recovered from Theorem \ref{colThm}.

\begin{remark}\label{rem}
It is possible to define a dominating measure such that the density of a FPP is of the form $n!p_n\pi_n(\{s_1,\dots,s_n\})$. This changes the statement in the colouring theorem to
\begin{align}
    f(S_0,S_1,\dots,S_K) =  f\bigg(\bigcup_{k=0}^K S_k\times \{k\}\bigg). \label{equiva}
\end{align}
In this case, the joint density is exactly equal to the density of the discretely marked point process, but it only holds for this particular choice.

\end{remark}

The equivalence described in Remark \ref{rem} is a fortuitous property of a single (and not widespread) dominating measure. We need to emphasize that, in general, the unitype density of the discretely marked point process is not the same thing as the multivariate density. The former is not taken wrt a product measure and therefore cannot be marginalized. This misconception has appeared in the literature and has led to some problematic arguments.

A result equivalent to Theorem \ref{colThm} is stated in \citet[Section 6.6.1]{moller2003statistical}, but no proof is given there. As a dominating measure, the authors use an extension of the standard PPP distribution that handles discrete marks. This arguably makes it more difficult to interpret when compared to our approach based on the counting-scattering measure. The implications of the result are important, yet under-appreciated, as we will discuss in Section \ref{SGCP}. We will mostly limit ourselves to the particular case of independent colourings described in the following result.
\begin{corollary}

Let $f$ be the density of the base point process $X$ over $\cS$. If the points of $X$ are each independently coloured from a PMF $p(c|s)$ for $c = 0,1,\dots,K$ and $s\in\cS$, then the joint density of the point processes $X_0,X_1,\dots,X_K$ corresponding to colours in $\{0,1,\dots,K\}$ has the form
$$
f(S_0,S_1,\dots,S_K) = \binom{n}{n_0,n_1,\dots,n_K} f\bigg(\bigcup_{k=0}^K S_k\bigg) \prod_{k=0}^K\prod_{s\in S_k} p(k|s).
$$
\end{corollary}
In Appendix \ref{MT3sect}, we showcase the generality of Theorem \ref{colThm} by applying it to the Mat\'ern type III process and, doing so, handily recover some important results from this literature.

Theorem \ref{colThm} provides a rigorous framework in which to state and verify statements about the marginal or conditional behavior of thinned and observed locations in such contexts. It is simple enough, yet its implications are important to some of the statistical literature on point processes. In the next section, we showcase how the Gaussian Cox process introduced in \citet{adams2009tractable} and the data augmentation procedure described therein can be better understood through the lens of this colouring theorem. Extensions of this model are proposed in \citet{rao2017bayesian,gonccalves2018exact}, but some of the results across the three publications are inconsistent with one another. We use the framework described above to resolve those issues.

\section{Unitype Sigmoidal Gaussian Cox Process}\label{SGCP}
The sigmoidal Gaussian Cox Process (SGCP) was introduced in \citet{adams2009tractable}. It is a particular type of Cox process: a PPP model with a stochastic intensity function. The SGCP intensity is constructed by mapping a Gaussian process (GP) $g(\cdot)$
to the $[0,1]$ interval through a function $\sigma(\cdot)$. Scaling is handled by a constant $\lambda$. The global model hierarchy is presented below.
\begin{definition}[Sigmoidal Gaussian Cox Process]\label{SGCPdef}
The SGCP is the Cox process defined by
\begin{enumerate}
    \item $g\sim\GP(m(\cdot),C(\cdot,\cdot))$,
    \item $X_1 \sim  \PPP(\lambda \sigma \circ g(\cdot))$.
\end{enumerate}
\end{definition}
The particular form of the mean $m(\cdot)$ and covariance $C(\cdot,\cdot)$ functions or their parametrization will not play a role in this section. The SGCP was originally presented in \citet{adams2009tractable} using the sigmoid $\sigma(\cdot) = \expit(\cdot):= \exp(\cdot)/(1+\exp(\cdot))$ transformation. \citet{gonccalves2018exact} trade the expit link for a probit type of transformation $\sigma(\cdot) = \Phi(\cdot)$ where $\Phi$ is the standard normal CDF.

\subsection{Forward simulation}

The usual difficulty with Gaussian Cox processes is that the PPP likelihood depends on the whole of $g$ through the integral $ \int_\cS \lambda \sigma(g(s))\d s$ of the intensity function. It can be challenging to either conduct inference about the intensity function or integrate out $g$ without relying on some finite-dimensional approximation.  Yet, realizations of the SGCP, as defined above, can be simulated using a representation of the GP of random but finite dimension.

\begin{algo}[SGCP Simulation] The following procedure generates a realization of the SGCP:\label{SGCPalgo}
\begin{enumerate}
    \item Simulate a homogeneous PPP $X$ with intensity $\lambda$,
    \item Instantiate the GP $g(\cdot)$ at the locations of $X$ according to its finite dimensional Gaussian distribution,
    \item Keep every point $x\in X$ with probability $\sigma(g(x))$, otherwise discard $x$.
\end{enumerate}
\end{algo}

The intuition behind the above algorithm is as follows. Suppose you could simulate and access the value of $g(\cdot)$ at every location of $\cS$. In any case, the intensity function $\lambda \sigma (g(\cdot))$ is bounded by $\lambda$ and so step 2 of Definition \ref{SGCPdef} can be carried out by thinning a homogeneous PPP $X$ of intensity $\lambda$ ($X$ does not depend on $g(\cdot)$). Once $X$ is fixed, the thinning procedure operates without regard for the values of $g(\cdot)$ at other locations, hence the rest of the GP values are superfluous. In that sense, the procedure outlined above is a \textit{retrospective} sampler \citep{beskos2006retrospective,papaspiliopoulos2008retrospective}; the simulation of $g(\cdot)$ is delayed until it is only needed in a finite-dimensional form to perform the thinning of $X$. In Appendix \ref{valAlgo}, we demonstrate the validity of this procedure using only properties of Gaussian processes and point process distribution theory: the locations left at the end of Algorithm \ref{SGCPalgo} are indeed exactly distributed according to the Gaussian Cox process of Definition \ref{SGCPdef}.

Now consider $\tilde X_0, \tilde X_1$ as those point processes with GP distributed marks arising from the process $g(\cdot)$, where $\tilde X_0$ corresponds to locations that were thinned and $\tilde X_1$ is its observed counterpart (those retained at the end of Algorithm \ref{SGCPalgo}). Both point processes consist of a finite number of values in $\cS\times \R$. In particular, their distributions are not absolutely continuous wrt to the unit rate PPP distribution. Regardless, the joint density (in the sense described in Sections \ref{csDens} and \ref{Marks}) of $\tilde X_0, \tilde X_1$ exists and is immediately derived by using Theorem \ref{colThm}. Indeed, following the steps of Algorithm \ref{SGCPalgo}, we can generate the $\{0,1\}$ coloured point process by first drawing a Poisson distributed number of points that are subsequently scattered uniformly on $\cS$. The GP values can then be instantiated at those locations and finally each pair is coloured as observed with probability $\sigma(g(\cdot))$. Let $(x,\mathrm g, c)$ represent triplets in $\cS \times \R \times \{0,1\}$, the density of the marked point process described above is
\begin{align}
  f(\{(x_1,\mathrm g_1, c_1),\dots,(x_n,\mathrm g_n, c_n)\}) &= p_n \pi_n(x_1,\dots,x_n)\pi_n(\mathrm g_1,\dots,\mathrm g_n|x_1,\dots,x_n)\nonumber\\&\qquad\qquad\qquad\qquad\qquad \pi_n(c_1,\dots,c_n|x_1,\mathrm g_1,\dots,x_n,\mathrm g_n)\nonumber \\
  &= \frac{\exp(-\lambda|\cS|)\lambda^n}{n!} \cN(\mathrm g_1,\dots,\mathrm g_n|0,\Sigma(x_1,\dots,x_n)) \nonumber\\
  &\qquad\qquad\qquad\qquad\qquad\prod_{i=1}^n [\{1-\sigma(\mathrm g_i)\}^{1-c_i}\sigma(\mathrm g_i)^{c_i}]. \label{unitypeDensity}
\end{align}
Notice that the dependence of the GP values on the locations acts through the covariance matrix $\Sigma(x_1,\dots,x_n)$ which is computed from the covariance function $C(\cdot,\cdot)$ in the usual fashion. Also, recall that scattering models need to be equivalent under permutations of indices $\{1,2,\dots,n\}$ as is the case here.

\textbf{Note:} Expression \eqref{unitypeDensity} is the density used in \citet{gonccalves2018exact} (they also use a counting-scattering type of dominating measure). However, this expression cannot be used to describe conditional distributions for thinned and observed locations because it is not a proper joint density (wrt to a product measure); it is the density of a single, unitype point process with a $\{0,1\}$ valued mark. This is a subtle but important distinction.

Theorem \ref{colThm} tells us that the joint density of the thinned and observed point processes $\tilde X_0$ and $\tilde X_1$ is essentially the density of the $\{0,1\}$ marked point process multiplied by some combinatorial factor:
\begin{align}
f(\tilde S_0,\tilde S_1) &= \binom{n_0+n_1}{n_0,n_1}f(\{\tilde S_0\times \{0\}\}\cup \{\tilde S_1\times \{1\}\})\nonumber\\
&= \frac{\exp(-\lambda|\cS|)\lambda^{n_0+n_1}}{n_0!n_1!} \cN(\tilde {\mathrm g}_0,\tilde {\mathrm g}_1|0,\Sigma(\tilde x_0,\tilde x_1))
\prod_{i=1}^{n_0} \{1-\sigma(\mathrm g_{0,i})\}\prod_{j=1}^{n_1}\sigma(\mathrm g_{1,j}). \label{jointSGCP}
\end{align}
In the final expression, $\tilde S_k = \{(x_{k,1},\mathrm g_{k,1}),\dots,(x_{k,n_k},\mathrm g_{k,n_k})\}$ is a finite subset of $\cS\times\R$ while $\tilde x_k = (x_{k,1},\dots,x_{k,n_k})$ and $\tilde {\mathrm g}_k = (\mathrm g_{k,1},\dots,\mathrm g_{k,n_k})$ are respectively the location and mark components ($k=0,1$). Remember that the original objective was to obtain a tractable model for the augmented data $\tilde X_0, \tilde X_1$, and this is achieved: their joint density does not involve the integral of the intensity function. The key element of the colouring theorem in Section \ref{Col} is that deriving the joint density of the thinned and observed points becomes automatic.

Once the joint density of the thinned ($\tilde X_0$) and observed ($\tilde X_1$) locations has been established correctly, there is nothing controversial in using proportionality arguments to derive, up to a normalizing constant, the form of the conditional density of $\tilde X_0| \tilde X_1$. This is standard practice for deriving posterior distributions in Bayesian approaches.  By removing every factor that does not involve $\tilde S_0$ in expression \eqref{jointSGCP}, we can write the density of the thinned locations (with associated GP marks) conditional on their observed counterpart:
\begin{align}
f(\tilde S_0|\tilde S_1) &\propto \frac{\lambda^{n_0}}{n_0!} \cN(\tilde {\mathrm g}_0,\tilde {\mathrm g}_1|0,\Sigma(\tilde x_0,\tilde x_1))
\prod_{i=1}^{n_0} \{1-\sigma(\mathrm g_{0,i})\}.\label{condDens}
\end{align}
The expression above is perfectly tractable. It could be employed in Metropolis-Hastings methods specifically designed for point processes such as the birth-death-move algorithm of \citet{geyer1994simulation}.

In expression \eqref{condDens}, the GP values are intrinsically assumed to be known at the observed locations, that is $\tilde {\mathrm g}_1$ needs to be given to use this conditional density. Those do not correspond to any observable quantities and also need to be simulated. In practice \citep{adams2009tractable,rao2017bayesian,gonccalves2018exact}, this is handled by alternating simulations of the thinned point process $\tilde X_0|\tilde X_1$ from \eqref{condDens} and then all the GP values $\tilde {\mathrm g}_0,\tilde {\mathrm g}_1$ conditional on all the locations $\tilde x_0,\tilde x_1$, both thinned and observed. The thinned values $\tilde {\mathrm g}_0$ of the GP are effectively sampled twice at each single step of the Markov chain.

Simulating the GP values conditional on locations is not a dimension-changing move and does not present any particular issue beyond standard applications of Metropolis-Hastings. This stage boils down to a latent GP model and one can choose the most suitable method to sample $\tilde {\mathrm g}_0,\tilde {\mathrm g}_1$ (along with covariance parameters) from an abundant literature. Examples include the elliptical slice sampler \citep{murray2010slice} and Hamiltonian Monte Carlo \citep{hensman2015mcmc}. See \citet{filippone2013comparative} for an extensive study of such algorithms.

The conditional in \eqref{condDens} is the Gaussian density evaluated at the GP values $\tilde {\mathrm g}_0,\tilde {\mathrm g}_1$. This necessitates $\cO(n^3)$ operations to compute, where $n= n_0+n_1$ is the total number of observations (thinned and observed). Thus, the GP has to be instantiated at even more points than just the ones contained in the data and this number moves at every iteration. Moreover, if the observed points are configured in dense clusters, one might need a substantially larger number of thinned locations to fill in and make the base, uncoloured process homogeneous. When done repeatedly in an MCMC algorithm, the computational burden can accumulate quickly even if $n_1$ is only moderately large. \citet{shirota2019scalable} overcome this difficulty in the context of the SGCP using the nearest neighbor approximation \citep{datta2016hierarchical} to factorize the Gaussian density.

In the next section, we focus on the point process on point process conditional \eqref{condDens} and assume everything else to be fixed. We discuss how to simulate the thinned point process from this density. Multiple methods have been proposed in the literature, some of which are inconsistent with one another, and we aim to provide a unifying clarification.  See Appendix \ref{SampThinApp} for a full discussion.

\subsection{Sampling the Thinned Point Process}

\citet{adams2009tractable} employ birth-death-move proposals to simulate the thinned point process ($\tilde X_0$) conditional on the observed point process ($\tilde X_1$). The algorithm of \citet{adams2009tractable} can be interpreted as an application of existing trans-dimensional MCMC methodology. The acceptance ratios are identical to what would be obtained by applying the birth-death-move algorithm described in \citet{geyer1994simulation}. They are also the same ratios as one would get upon using the more general reversible jump MCMC methodology \citep{green1995reversible} (see example 3 in \citet{tierney1998note}).

Inserting, deleting and displacing one point at a time can add up to a challenging computational task. However, previous computations can be stored efficiently to curb the complexity of evaluating the Gaussian conditional. Since only one location is changed at a time in the birth-death-move algorithm, the inverse of the covariance matrix can be updated accordingly from the previous iteration using less burdensome computations. There is also the challenge of determining after how many birth-death-move proposals can we reasonably consider the resulting point process to be a new realization of the thinned locations. If there are too few, then consecutive iterations are composed of essentially the same points. 

The approach of \citet{adams2009tractable} is valid beyond any doubt in the sense that each update is reversible wrt the density in \eqref{condDens}. It would nevertheless be preferable to simulate directly, if possible, from the distribution of the thinned locations $\tilde X_1$ conditional on the observed points. \citet{gonccalves2018exact} attempt direct sampling by first simulating the number of locations and then simulating their position in $\cS$.  In their original publication, \citet{gonccalves2018exact} used a proportionality argument that omitted a term in the PMF of $n_0$. This step has been modified in \citet{gonccalves2023corrigendum}, which we discuss below.

\subsection{Retrospective sampling} \label{retSamp}

Conditional on the whole random field $g(\cdot)$, the thinned locations in $\cS$ are PPP($\lambda (1-\sigma \circ g(\cdot))$) distributed and are independent of the observed locations. This is implied by the original Poisson colouring theorem \citep{lewis1979simulation,kingman1992poisson}. In that case, the thinned locations can be simulated by thinning a Poisson process of homogeneous intensity $\lambda$. Consequently, only a finite representation of $g(\cdot)$ is required just as was the case in the retrospective sampler described by Algorithm \ref{SGCPalgo}.

This suggests a strategy to simulate from the conditional $\tilde X_0|\tilde X_1$. First, conceptually instantiate the GP from $g(\cdot)|\tilde X_1$. Second, sample from $\tilde X_0|g(\cdot)$ using Poisson thinning by uncovering the necessary values of $g()$ in the process. The difficulty, as we will see, lies in determining the finite-dimensional distributions of the random field $g(\cdot)|\tilde X_1$. On the one hand, $\tilde X_1$ consists of locations in $\cS$ with associated GP marks. Intuitively speaking, we could conclude that $g(\cdot)|\tilde X_1$ has the distribution of a GP conditional on some values being fixed. This is the argument invoked in \citet[Section 7.1]{rao2017bayesian}. It implies the following retrospective sampling procedure to update the thinned point process:
\begin{enumerate}
    \item Simulate a homogeneous PPP $X$ with intensity $\lambda$,
    \item Instantiate the GP $g(\cdot)$ at the locations of $X$ according to its finite-dimensional Gaussian distribution conditional on the observed values $\tilde {\mathrm g}_1 $ at the observed locations $\tilde x_1$.
    \item Keep each point of $\{(x,g(x)): x\in X\}$ with probability $1-\sigma(g(x))$.
\end{enumerate}
The procedure above defines an SGCP-like Cox process with stochastic intensity $\lambda\{1-\sigma(g_*(\cdot))\}$, where $g_*$ has the GP distribution of $g$ conditional on its value at the observed locations. Ultimately, the algorithm above is misconceived. Where the logic breaks down is that $g(\cdot)|\tilde X_1$ is not simply a GP. There is more information in $\tilde X_1$ than some values of $g(\cdot)$ at some locations of $\cS$. Those are values left after an independent thinning procedure with probabilities of being observed $ \sigma \circ g(\cdot)$. In some sense, we would already expect the values of the GP marks in $\tilde X_1$ to be high.

Another way to look at it is to consider the special case where the observed point process is empty. This situation is nonsensical from an applied perspective, but should nevertheless be handled by the general theory as the empty set is in the support of the observed point process $\tilde X_1$. According to the procedure in \citet{rao2017bayesian}, this would mean that the conditional $\tilde X_0| \tilde X_1 = \emptyset$ is the same as the marginal distribution of the thinned point process under the SGCP since there are no values upon which to condition the GP. This is somewhat suspicious as one would expect that 0 observations would be indicative of globally smaller values for the GP than under the marginal and, consequently, result in the conditional PMF for the number of thinned locations to be skewed towards higher values. In Appendix \ref{concMist}, we use this special case as a counter-example to the overall claim.

The approach of \citet{gonccalves2023corrigendum} is the same as the one in \citet{rao2017bayesian} with the exception that the GP values are simulated conditional on both the thinned and observed point processes.
\begin{algo}Update procedure for the thinned point process $\tilde X_0$ in \citet{gonccalves2023corrigendum}:\label{retroSampler}
   \begin{enumerate}
    \item Simulate a homogeneous PPP $X$ with intensity $\lambda$,
    \item Instantiate the GP $g(\cdot)$ at the locations of $X$ according to its finite-dimensional Gaussian distribution conditional on the thinned and observed values $\tilde {\mathrm g}_0,\tilde {\mathrm g}_1 $ at the observed locations $\tilde x_0,\tilde x_1$.
    \item Keep each point of $\{(x,g(x)): x\in X\}$ with probability $1-\sigma(g(x))$.
\end{enumerate}
\end{algo}

This procedure is similar to the birth procedure in that new points are simulated uniformly on $\cS$ and are assigned a mark by conditioning the Gaussian distribution on all other observations. It is not an exact sample from the full conditional \eqref{condDens} since the current state of $\tilde X_0$ plays a role in the update. The authors justify this method as a Gibbs sampler with the whole GP $g(\cdot)$ as one of the coordinates. The first step is to sample from $g(\cdot)|\tilde X_0,\tilde X_1$ which is performed retrospectively when sampling the Poisson process implied by $\tilde X_0|g(\cdot),\tilde X_1$. We would need to prove that the finite-dimensional distribution of $g(\cdot)$ conditional on the thinned and observed point process is indeed Gaussian as claimed. Recall that this was not the case when conditioning only on the observed points.

\textbf{Note:} In Appendix \ref{valRetroSampler}, we offer a more direct argument that foregoes retrospective sampling entirely and instead operates at the level of the two finite point processes $\tilde X_0$ and $\tilde X_1$. We show that Algorithm \ref{retroSampler} represents an update that is reversible wrt the conditional density in \eqref{condDens} and therefore leaves the target distribution invariant.

Rather than inserting or deleting one location at a time, the approach of \citet{gonccalves2023corrigendum} updates the thinned locations with an entirely new point process at each iteration. It is also relatively straightforward to implement which makes it an enticing option when employed, as intended, as a data augmentation step in an inference algorithm. We apply the spirit of their algorithm in the extension we present in the next section.

\section{Multitype Sigmoidal Gaussian Cox Process} \label{MTSGCP}

In this section, we propose an extension to the SGCP that handles multiple types of points. The Lansing Woods data from Section \ref{Intro} is publicly available from the \texttt{spatstat} package in R.  We conduct MCMC-based Bayesian inference on the multitype point process consisting of the position of maples and hickories.  This data set has been analyzed with a multitype log Gaussian Cox process (LGCP) model \citep{moller1998log,brix2001space} in \citet{waagepetersen2016analysis} using a minimum contrast estimator. The sampling distribution of such an estimator can be complicated. The authors use the parametric bootstrap to quantify uncertainty in the estimation of correlation parameters. Our Bayesian approach offers an alternative way to measure uncertainty in such models. One advantage of multitype Gaussian Cox processes such as the LGCP or our multitype version of the SGCP is that inter-process dependence can be relegated to the cross-covariance function of the latent multivariate Gaussian random field. There is a vast literature on multivariate GPs, see \citet{genton2015cross} for a review of some of the more common modeling approaches.

\subsection{Model Specification}

We generalize the SGCP model with probit link that was used in \citet{gonccalves2018exact} to model multitype point patterns. We assume at the first level that the point processes $X_1,\dots,X_K$ corresponding to the various types are independent and Poisson distributed conditional on some random and multivariate intensity function. Starting from a $K$-variate GP $g(\cdot)$, the function is defined by introducing a $K$-dimensional standard normal variable $Z$:
\begin{align}
  &X_j|g(\cdot) \indep \PPP(\lambda \sigma_j\circ g(\cdot)), \nonumber \\&\quad \text{where } \sigma_j(\mathrm g) = P\left(\left\{\argmax (\mathrm g + Z) = j \right\} \cap \left\{\max (\mathrm g + Z) > 0\right\}\right),   \label{mtSGCPppp}
\end{align}
for $j=1,2,\dots,K$. The $\sigma(\cdot)$ function maps a vector in $\R^K$ to a point on the $K$-dimensional probability simplex.

The $\sigma_j(\cdot)$ functions are defined by integrals of Gaussian densities over conic domains. They are cumbersome to compute whenever $K>1$. For this reason, we adapt the auxiliary variable formulation of the probit multinomial model \citep{mcculloch2000bayesian} to the spatial context by defining $Y(\cdot) = g(\cdot) + Z(\cdot)$. Here, $Z(\cdot)$ consists of $K$ independent white noise processes with unit variance. For $K=1$, we recover the data augmentation scheme of \citet{albert1993bayesian}. We can redefine the model in terms of this new random field $Y(\cdot)$:
\begin{align}
  X_j|Y(\cdot) \indep \PPP(\lambda \tau_j\circ Y(\cdot)), \quad \text{where } \tau_j(y) = \ind\left(\left\{\argmax (y) = j \right\} \cap \left\{\max (y) > 0\right\}\right),   \label{mtSGCPpppDA}
\end{align}
where the functions $\tau_j(\cdot)$ are binary valued and at most one entry from $j=1,2,\dots,K$ is non-null.

Adding a multivariate Gaussian random field distribution to $g(\cdot)$ completes the Cox process specification. We use the non-separable version of the linear model of coregionalization (LMC); see \citet{schmidt2003bayesian,gelfand2004nonstationary,gelfand2005spatial} for notable examples of its use in a Bayesian context. Such a process is constructed at each location from a linear transformation of independent random fields:
$$
g(s) = \begin{bmatrix}
g_1(s)\\
\vdots \\
g_K(s)\\
\end{bmatrix} = A \begin{bmatrix}
w_1(s)\\
\vdots \\
w_K(s)\\
\end{bmatrix} + \begin{bmatrix}
\mu_1\\
\vdots \\
\mu_K\\
\end{bmatrix}, \quad
 \text{with } w_j(\cdot) \indep \text{GP}(0,C_j(|\cdot|)).$$
The independent GPs $w_j(\cdot)$ above are parametrized by some stationary and isotropic correlation functions $C_j(|\cdot|)$ while scaling is handled by the full rank matrix $A$. In our implementation, we use exponential correlation functions with distinct range parameter $\rho_j > 0$, i.e. $C_j(r) = \exp(-\rho_j r), r \geq0$. We add type-specific mean levels $\mu_j,j=1,\dots,K$. We chose the LMC as a multivariate spatial model over alternatives because its covariance structure can be exploited computationally to obtain likelihood evaluations that are linear in $K$ rather than cubic \citep{alie2024computational}.

Same as in the univariate case described in Section \ref{SGCP}, the product of PPP likelihoods implied by \eqref{mtSGCPpppDA} involves the integral of the random intensity functions $\lambda\tau_j \circ Y(\cdot)$. On the other hand, we can interpret the model as arising from the colouring of a homogeneous PPP$(\lambda)$ process where a point at location $s\in \cS$ will be of species $j$ with probability $\sigma_j\circ g(\cdot)$ (equivalently if $\tau_j\circ Y(\cdot)=1$) independently of other points. This point will be of none of the $K$ species if all the components of $Y(s)$ are negative. We consider the auxiliary PPP $X_0$ of intensity $\lambda\tau_0 \circ Y(\cdot)$ with $\tau_0(y) =\ind\left(\max (y) < 0\right)$. This point process consists of fictional thinned locations. Heuristically, $X_0$ fills in the gaps to ensure that the union $\cup_{j=0}^K X_j$ is homogeneous.

We can devise a retrospective sampling procedure analogous to Algorithm \ref{SGCPalgo} to generate the point processes $\tilde X_0,\tilde X_1,\dots,\tilde X_K$ (including the thinned locations) along with the $K$-variate GP values at each location. To summarize, we can simulate $X\sim \PPP(\lambda)$ and instantiate $g(\cdot)$ at this finite number of locations in accordance with the multivariate normal distribution implied by the LMC. The values of the process $Y(\cdot)$ are obtained by adding to $g(\cdot)$ a standard $K$-variate normal random variable at each location in $X$. Each point $x \in X$ can then be classified in one of the $K+1$ categories along with its GP marks $g(x)$ and $Y(x)$ according to the rule defined by the $\tau_j(\cdot)$ functions.

Such a procedure operates without regard for the values of the GPs $g(\cdot)$ and $Y(\cdot)$ at locations of $\cS$ not included in $X$. Define $\tilde X_j$ as the points in $x \in X$ of the categorical mark (colour) $j$ along with their 2 corresponding multivariate GP values $g(x)$ and $Y(x)$ (each point has $2K$ marks). The density of the point processes $\tilde X_0, \tilde X_1, \dots, \tilde X_K$ wrt to the product of counting-scattering measures $\mu_{\text f}^{(K+1)}$ is obtained from the Colouring Theorem of Section \ref{Col}:
\begin{align}
f(\tilde S_0,\tilde S_1,\dots,\tilde S_K) = &\frac{\exp(-\lambda |\cS|) \lambda^{\sum_{j=0}^K n_j}}{\prod_{j=0}^K n_j !} \cN(\tilde{\mathrm g}_0,\tilde{\mathrm g}_1,\dots,\tilde{\mathrm g}_K|\tilde x_0,\tilde x_1,,\dots,\tilde x_K,\mu_{j=1}^K,A,\rho_{j=1}^K)\nonumber \\
&\quad \prod_{i=1}^{n_0} \cN(y_{0,i}|\mathrm g_{0,i},\mat I) \ind\left(\max (y_{0,i}) < 0\right)\nonumber \\
&\quad\prod_{j=1}^K \prod_{i=1}^{n_j} \cN(y_{j,i}|\mathrm g_{j,i},\mat I) \ind\left(\left\{\argmax (y_{j,i}) = j \right\} \cap \left\{\max (y_{j,i}) > 0\right\}\right). \label{completeDensity}
\end{align}
In this last expression, $\tilde S_j$ are finite subsets of $\cS \times \R^{2p}$ composed of physical locations $\tilde x_j = (x_{j,1},\dots,x_{j,n_j})$ along with marks $\mathrm{\tilde g}_j = (\mathrm g_{j,1},\dots,\mathrm g_{j,n_j})$ and $\tilde y_j = (y_{j,1},\dots,y_{j,n_j})$ for $j=1,\dots,K$. Importantly, including the point process $\tilde X_0$ makes the complete data likelihood free of intractable integral terms.

\subsection{Bayesian Inference} \label{bayInf}

A fully Bayesian specification is completed by assigning a prior to the model parameters $\lambda, \mu_{j=1}^K, A, \rho_{j=1}^K$. We respectively assign gamma and normal priors to $\lambda$ and $\mu_{j=1}^K$. This is mainly to ensure conjugate full conditional updates for these quantities. The components of $A$ are each assigned a normal prior. Finally, we put a uniform prior over $[3,30]$ on each of $\rho_{j=1}^K$ which roughly corresponds to a practical range (the distance at which correlation is equal to 0.05) between 0.1 and 1. We obtain a posterior sample by alternating updates that are reversible wrt the full conditionals of quantities of interest: the thinned point process $\tilde X_0$, the values $\tilde {\mathrm g}_0,\tilde {\mathrm g}_1,\dots,\tilde {\mathrm g}_K$ and $\tilde y_0,\tilde y_1,\dots,\tilde y_K$ of the latent GPs and the model parameters $\lambda, \mu_{j=1}^K, A, \rho_{j=1}^K$.

First, we perform the data augmentation step where the thinned point process $\tilde X_0$ is updated along with its GP marks. This is done, in the spirit of Algorithm \ref{retroSampler}, by first sampling a homogeneous PPP $X^*$ over $\cS$ of intensity $\lambda$. We then instantiate the random fields $g(\cdot)$ and $Y(\cdot)$ conditional on the GP observations contained in the current state of $\tilde X_0, \tilde X_1, \dots, \tilde X_K$. The new locations are those $x \in X^*$ for which $\max(Y(x))<0$. Those form the updated point process $\tilde X_0$ along with their $2K$ marks.

Next we sample all the GP values $\tilde {\mathrm g}_0,\tilde {\mathrm g}_1,\dots,\tilde {\mathrm g}_K$ and $\tilde y_0,\tilde y_1,\dots,\tilde y_K$ in a fixed dimension update. The marks of the thinned point processes are effectively sampled twice in a single update of the Markov chain. The values of the auxiliary GP $Y(\cdot)$ are sampled in turn from their univariate full conditional distribution which, from \eqref{completeDensity}, are truncated normal. Updating the values of $g(\cdot)$ boils down to instantiating random effects in a latent Gaussian model. We use the algorithm described in \citet{alie2024computational} rather than a straightforward multivariate normal update to avoid factorizing the $nK \times nK$ covariance matrix. Importantly, those potentially high dimensional quantities are all sampled from full conditionals. There are no proposal parameters that could otherwise be complicated to tune and tailor to any particular example.

As for the model parameters, we use regular Metropolis-Hastings to sample the range parameters $\rho_j$ in turn. The whole coregionalization matrix $A$ is updated using slice sampling \citep{neal2003slice,murray2010slice}. Finally, $\lambda$ and the mean vector $\mu$ are sampled from their conjugate full conditionals, respectively the gamma and multivariate Gaussian distributions. In the next section, we present our joint analysis of the two point configurations from Lansing Woods (showcased in Figure \ref{fig:mapHick}). To this end, we ran the algorithm outlined above for 10,000 iterations, discarding the first 2,000 as burn-in.

\subsection{Application}

The multitype extension allows us to fit the intensity function of the hickory and maple species simultaneously (see Figure \ref{fig:mapHickInt}). At each iteration of the Markov chain, the $K$-variate Gaussian process $g(\cdot)$ can be instantiated on a fine grid conditionally on their values $\tilde {\mathrm g}_0,\tilde {\mathrm g}_1, \dots,\tilde {\mathrm g}_K $ at thinned and observed locations. The transformation \eqref{mtSGCPppp} of $g(\cdot)$ that defines the intensity function is intractable. However, we can instantiate the $Y(\cdot)$ process on this grid by adding a $K$-dimensional standard normal at each location. We then compute the binary transformations $\tau_j(\cdot)$, multiply each one by the current value of $\lambda$ and ultimately average over all iterations of the Markov chain. Any desired resolution can be achieved although computations scale as the number of pixels cubed. This can technically be accomplished after generating the Markov chain using conditional properties of the multivariate normal distribution. However, it is more efficient to perform the calculation at each iteration so that the inverse covariance matrix at observed and thinned locations need not be recomputed or stored.

\begin{figure}[b!]
    \centering
    \includegraphics[width = 6cm]{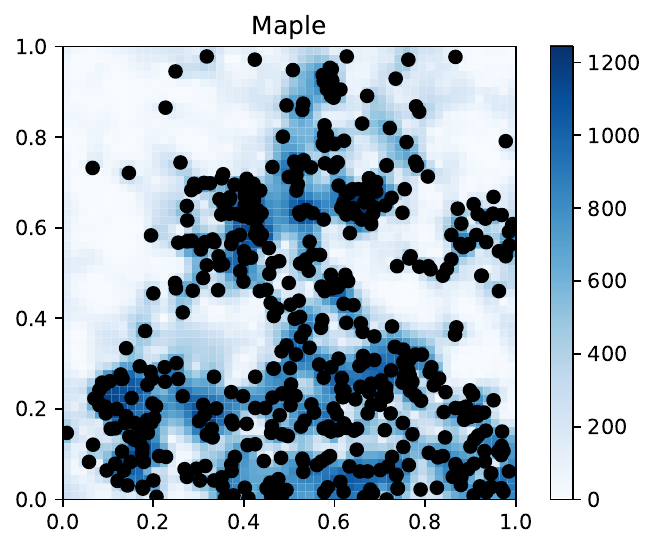}
    \includegraphics[width = 6cm]{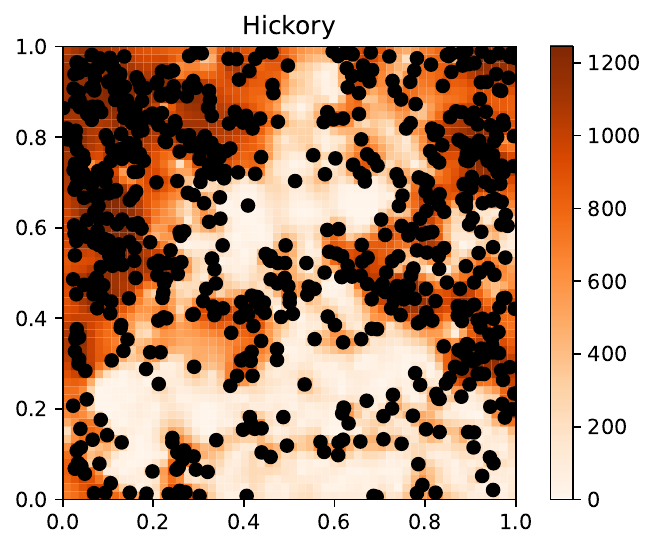}
    \caption{The pointwise posterior mean intensity function $\lambda \tau_j \circ Y(\cdot), j=1,2$ along with tree positions for maples (left) and hickories (right).}
    \label{fig:mapHickInt}
\end{figure}

Figure \ref{fig:mapHickInt} shows that the procedure is flexible enough to capture the particular growing pattern of the two types of trees through the functional form of their respective intensity. However, it tells us where maples and hickories grow, but it does not provide an answer as to why they grow in such a configuration. The univariate method of \citet{adams2009tractable} was devised to estimate the intensity function of a non-homogeneous PPP. In the multitype setting, the point processes are assumed independent at the first stage. In that sense, there is no added benefit in fitting both intensity functions jointly (unless we believed them to be dependent {\it a priori}).

Rather than considering the transformed GP as a flexible prior for the intensity function of Poisson distributed point patterns, we instead consider the model to be the Gaussian Cox process implied by integrating out the GP values. The parameters of interest are the base intensity $\lambda$ and those driving the LMC including the coregionalization matrix $A$, the range parameters $\rho_j$ and mean levels $\mu_j,j=1,\dots,K$. From the perspective of posterior sampling, nothing changes other than the fact that interpolating the intensity function on a fine grid might not be needed anymore. From this point of view, the particular pattern in which maple and hickories grow is irrelevant. It is not possible to discriminate between a Cox process and a non-homogeneous PPP from only one point configuration \citep{moller2003statistical,isham2010spatial}. The difference lies in the nature of repetition. If we were to look at a different patch of forest, maples and hickories would certainly grow into a different pattern. Nevertheless, we would expect to observe the same seemingly repulsive interaction between the two species.

We measure intra- and inter-species interaction with the (cross) pair correlation function (PCF):
\begin{align}
\gamma_{k\ell}(s,t) = \frac{E[\tau_k(Y(s)) \tau_\ell (Y(t))|A,\rho_{j=1}^K,\mu_{j=1}^K]}{E[ \tau_k(Y(s))|A,\rho_{j=1}^K,\mu_{j=1}^K]E[\tau_\ell (Y(t))|A,\rho_{j=1}^K,\mu_{j=1}^K]}, \qquad k,\ell=1,2,\dots,K.   \label{pcfnsi}
\end{align}
Roughly speaking, $\gamma_{k\ell}(s,t)$ is the probability of observing points of types $k,\ell$ at infinitesimal balls centered around $s$ and $t$ normalized by the product of marginal probabilities of observing them in such areas. This quantity is independent of the scale of the intensity function ($\lambda$ cancels out in this expression). The PCF as expressed in \eqref{pcfnsi} is a property of the Cox process model. It is a function of model parameters thereby expectations are taken conditionally on them being fixed. This caveat is only necessary in the Bayesian paradigm where parameters also have distributions.

Globally, the model is stationary and isotropic provided the covariance structure of the driving GPs is (so is the case with the exponential kernels $\exp(-\rho_j |\cdot|)$ we use in our LMC specification). In this case, $\gamma_{k \ell}(s,t)$ is a function of the distance $r = |s-t|$ between $s$ and $t$. The cross PCF \eqref{pcfnsi} does not admit a simple parametric form as a function of $r$. It can nevertheless be evaluated at any distance $r>0$ and for any set of parameters $A,\rho_{j=1}^K,\mu_{j=1}^K$ using a consistent Monte Carlo estimator. Reasonable precision is rapidly attained as we only rely on averages computed from repetitions of $2K$-dimensional Gaussian vectors. Those are distributed as two points of the $Y(\cdot)$ process separated by distance $r>0$. We evaluate the pointwise estimate of the PCF at multiple distances that we average over multiple sets of parameters sampled from their posterior distribution to account for Bayesian model uncertainty.

The estimated PCFs for the maple and hickory example are illustrated in Figure \ref{fig:pcfhickmap}. Stationary and isotropic Cox processes imply points of the same type that cluster at short range. We are more interested in the cross-pair correlation function $\gamma_{12}$; this quantity being smaller than 1 implies a repulsive interaction between maples and hickories. The dependence seems to diminish and become negligible at distances bigger than 0.4. An expanded discussion on the dependence structures that can be induced by our multitype SGCP is featured in Appendix \ref{depsSGCP}.

\begin{figure}[t!]
    \centering
    \includegraphics[height = 6cm]{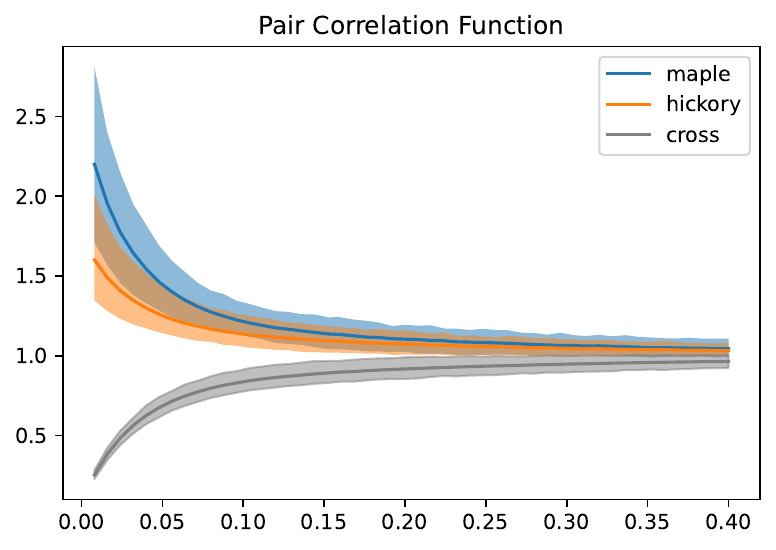}
    \caption{The pointwise posterior mean for the PCF $\gamma_{11}$ is illustrated in blue along with 90\% credible intervals computed at each point. Similarly, $\gamma_{22}$ is shown in orange and the cross pair correlation $\gamma_{12}$ is in gray.}
    \label{fig:pcfhickmap}
\end{figure}

\section{Discussion}

In this paper, we presented a general approach to obtain the distribution of a multitype point process defined through any discrete marking (colouring) of a base point configuration. This has implications for point process models defined by thinning procedures: the joint density of thinned and observed locations can be derived from the colouring theorem of Section \ref{Col}. Instantiating the thinned point process conditional on observed locations can help circumvent intractability in the marginal model and help carry inference of model parameters in a Bayesian framework.

Such data augmentation schemes have been employed in recent publications but in the absence of a unifying approach to the concept. For example, \citet{rao2017bayesian,gonccalves2018exact} introduce alternative methods to simulate the thinned locations in the context of the SGCP, but those turn out to be incompatible with the original formulation of \citet{adams2009tractable}. The alternative formulation introduced in the 2023 corrigendum of \citet{gonccalves2018exact} is perhaps the most enticing approach from a conceptual and computational standpoint. However, apart from the now retracted version of \citet{gonccalves2018exact}, none of the other proposals came with a formal proof of validity concerning the data augmentation update. The conceptual tools we presented in this paper allowed us to indubitably resolve these inconsistencies. Moreover, Theorem \ref{colThm} can be a very useful tool to design new multitype point process models with interesting dependence structures. In this regard, we introduce a generalization of the SGCP to jointly model multiple point patterns. We showcase our new model on the Lansing Woods data and conduct inference on the cross-pair correlation function of maples and hickories.

Besides correcting some misconceptions from the literature, our work can empower future authors to use more involved models and design even more efficient sampling algorithms without having to consider the intricate measure-theoretic details involved in defining joint, conditional and marginal densities for point processes. Interesting avenues to explore include thinning procedures applied to more complex models than Poisson point processes. For example, independently thinned Markov point processes have been introduced in the seminal work of \citet{baddeley2000non}. The general inference framework we provided might be an interesting alternative to the two-step semi-parametric procedure they present.


\subsubsection*{Funding}
Alie, Stephens and Schmidt acknowledge support from the Natural Sciences and Engineering Research Council of Canada (NSERC). Alie also acknowledges the support from the Fonds de recherche du Québec – Nature et technologies (FRQNT).


\bibliographystyle{plainnat}
\bibliography{references}

\appendix

\section{Proof of Theorem \ref{colThm}}\label{proColThm}

The following lemma encapsulates most of the technicalities involved in proving the colouring theorem of Section \ref{Col}. It tells us how to decompose an integral over the set $\cN_{\mathrm f}(\cS)$ of all finite subsets of $\cS$ into an integral over finite subsets of smaller spaces $\cS_0,\cS_1,\dots,\cS_K$.
\begin{lemma}\label{lem1}
    Let $\cS_0,\cS_1,\dots,\cS_K$ be any partition of the space $\cS$. The following holds for any positive function $h:\cN_{\mathrm f}(\cS) \tends [0,\infty)$:
    $$
    \int h(S)\mu_{\mathrm f}(\d S) = \int_{\cN_{\mathrm f}(\cS_K)} \dots\int_{\cN_{\mathrm f}(\cS_1)}\int_{\cN_{\mathrm f}(\cS_0)} \binom{\sum_{k=0}^K N_k}{N_0,N_1,\dots,N_K}  h\bigg(\bigcup_{k=0}^K S_k\bigg)\mu_{\mathrm f}(\d S_0)\mu_{\mathrm f}(\d S_1)\dots\mu_{\mathrm f}(\d S_K),
    $$
    where we use $N_k \equiv N(S_k)$ to denote the cardinality of the set $S_k$.
\end{lemma}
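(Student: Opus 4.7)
My plan is to unravel the definition of the counting-scattering measure on both sides and reduce the identity to a purely combinatorial rearrangement. Since $h$ is nonnegative, Tonelli's theorem licenses every interchange of sums and integrals without further comment.

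First I would expand the LHS using the definition of $\mu_{\mathrm f}$:
\begin{align*}
\int h(S)\, \mu_{\mathrm f}(\d S) = \sum_{n \geq 0} \int_{\cS^n} h(\{s_1,\dots,s_n\})\, \mu(\d s_1)\cdots \mu(\d s_n).
\end{align*}
Then, for each fixed $n$, I would use the partition $\cS = \sqcup_{k=0}^K \cS_k$ to split $\cS^n$ into the $(K+1)^n$ disjoint product pieces $\cS_{k_1}\times\cdots\times\cS_{k_n}$ indexed by $(k_1,\dots,k_n) \in \{0,\dots,K\}^n$, turning the integral over $\cS^n$ into the corresponding sum of integrals over those pieces.

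Next I would group the resulting terms by multiplicity profile: for each tuple $(n_0,\dots,n_K)$ of nonnegative integers summing to $n$, there are exactly $\binom{n}{n_0,\dots,n_K}$ tuples $(k_1,\dots,k_n)$ in which label $k$ appears $n_k$ times. Because $h$ acts on a \emph{set}, it is invariant under any permutation of its arguments; hence for every such tuple the integral of $h$ over $\cS_{k_1}\times\cdots\times\cS_{k_n}$ equals the canonical integral over $\cS_0^{n_0}\times\cdots\times\cS_K^{n_K}$ in which the first $n_0$ variables lie in $\cS_0$, the next $n_1$ in $\cS_1$, and so on. This is the only place where the symmetric set-valued nature of the point process really enters and where the multinomial coefficient is produced; I expect this combinatorial bookkeeping to be the main (and essentially only) obstacle.

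Finally, swapping the order of summation via $\sum_{n\geq 0}\sum_{n_0+\cdots+n_K=n} = \sum_{n_0,\dots,n_K \geq 0}$, I would rewrite the result as
\begin{align*}
\sum_{n_0,\dots,n_K\geq 0} \binom{\sum_k n_k}{n_0,\dots,n_K} \int_{\cS_0^{n_0}}\!\!\cdots\!\!\int_{\cS_K^{n_K}} h\bigg(\bigcup_{k=0}^K \{s_{k,1},\dots,s_{k,n_k}\}\bigg) \prod_{k,i}\mu(\d s_{k,i}),
\end{align*}
and recognise this, by applying the definition of $\mu_{\mathrm f}$ on each of the $K+1$ coordinate subspaces in turn, as the iterated integral on the RHS of the lemma. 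This concludes the proof; no measurability subtleties arise because $h$ is assumed positive and the partition of $\cS^n$ is finite for each $n$.
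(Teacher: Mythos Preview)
Your proof is correct and follows essentially the same approach as the paper's: expand the counting-scattering measure, split $\cS^n$ into product pieces indexed by label sequences, and use the symmetry of $h$ to collapse the $\binom{n}{n_0,\dots,n_K}$ equivalent integrals before re-indexing the sum. The only cosmetic difference is that the paper first treats the binary partition $K=1$ and then invokes iteration to reach general $K$, whereas you handle the multinomial case directly in one pass.
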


\begin{proof} It is sufficient to demonstrate the case for a binary partition $\cS_0,\cS_1$. The general case can then be obtained by further partitioning those sets. For a positive function $h$, the integral with respect to the counting-scattering measure can be computed as
\begin{align}
    \int h(S)\mu_{\mathrm f}(\d S) = \sum_{n\geq 0} \int_{\cS^n} h(\{x_1,x_2,\dots,x_n\}) \mu(\d x_1)\mu(\d x_2)\dots\mu(\d x_n) \label{CSint}
\end{align}
where $\mu$ is a measure over subsets of $\cS$. This is the extension of equation \eqref{CSmeas} from indicator to positive functions by the monotone class theorem.

For every $n\geq 0$, the inside integral on the RHS of \eqref{CSint} can be split into the $2^n$ parts defined by whether each of the coordinates belong to $\cS_0$ or $\cS_1$:
\begin{align}
\sum_{\mat k\in \{0,1\}^n}\int_{\cS_{k_n}} \dots \int_{\cS_{k_2}} \int_{\cS_{k_1}} h(\{x_1,x_2,\dots,x_n\}) \mu(\d x_1)\mu(\d x_2)\dots\mu(\d x_n). \label{partInt}
\end{align}
Recall that $h$ is a set function and the arbitrary ordering $1,2,\dots,n$ is irrelevant to its value. It is symmetric wrt to the integral coordinates $x_1,x_2,\dots,x_n$.

Therefore, to evaluate any of the summands in \eqref{partInt}, it matters only to know how many coordinates among the $n$ belong to $\cS_0$ and, accordingly, how many belong to its complement $\cS_1$. We can thus catalog each term in summation \eqref{partInt} by first noting how many $m\in\{0,1,\dots,n\}$ of its coordinates are integrated over $\cS_0$ and then sum the $\binom{n}{m}$ equivalent components:
\begin{align}
    \sum_{m=0}^n \binom{n}{m} \int_{\cS_0^m}\int_{\cS_1^{n-m}}h(\{x_1,x_2,\dots,x_n\}) \mu(\d x_1)\mu(\d x_2)\dots\mu(\d x_n). \label{symInt}
\end{align}

Substituting \eqref{symInt} for the integral on the RHS of \eqref{CSint}, we obtain
\begin{align*}
    \int h(S)\mu_{\mathrm f}(\d S) = \sum_{n\geq 0}  \sum_{m=0}^n \binom{n}{m} \int_{\cS_0^m}\int_{\cS_1^{n-m}}h(\{x_1,x_2,\dots,x_n\}) \mu(\d x_1)\mu(\d x_2)\dots\mu(\d x_n).
\end{align*}
We can re-index the double sum on the RHS using $n_0=m$ and $n_1=n-m$ and get
$$
\sum_{n_0\geq 0}  \sum_{n_1\geq 0} \binom{n_0+n_1}{n_0,n_1} \int_{\cS_0^{n_0}}\int_{\cS_1^{n_1}}h(\{x_1,x_2,\dots,x_{n_0+n_1}\}) \mu(\d x_1)\mu(\d x_2)\dots\mu(\d x_{n_0+n_1}).
$$
This last expression is equivalent to the double integral
$$
\int_{\cN_{\mathrm f}(\cS_1)}\int_{\cN_{\mathrm f}(\cS_0)} \binom{N_0 + N_1}{N_0,N_1}  h(S_0 \cup S_1)\mu_{\mathrm f}(\d S_0)\mu_{\mathrm f}(\d S_1),
$$
which completes the proof.

\end{proof}

We now prove the colouring theorem of Section \ref{Col}. Recall that this result relates the density of a point process $\tilde X$ over domain $\cS$ with mark space $\{0,1,\dots,K\}$ to the multitype process $X_0,X_1,\dots,X_K$ consisting of the locations in $\cS$ with those respective marks.

\begin{proof}[Proof of Theorem \ref{colThm}]
Take any sets $F_0,F_1,\dots,F_K\subseteq \cN_\text{f}(\cS)$, it is sufficient to verify that the distribution
\begin{align}
    P(X_0\in F_0, X_1 \in F_1,\dots,X_K\in F_K) \label{jointDist}
\end{align}
is equal to the integral
$$
\int_{F_K}\dots\int_{F_1}\int_{F_0} \binom{\sum_{k=0}^K N_k}{N_0,N_1,\dots,N_K} f\bigg(\bigcup_{k=0}^K S_k\times \{k\}\bigg)\mu_{\mathrm f}(\d S_0)\mu_{\mathrm f}(\d S_1)\dots\mu_{\mathrm f}(\d S_K),
$$
where $f$ is the density of the base process $\tilde X$ wrt to $\tilde\mu_f$: the counting-scattering measure constructed from the product of $\mu$ with the counting measure, i.e.
$$
\tilde\mu_\text{f}(\tilde F) = \sum_{n\geq 0} \int_{\cS^n}\sum_{\mat c \in \{0,1,\dots,K\}^n} \ind_{\tilde F}(\{(s_1,c_1),\dots,(s_n,c_n)\}) \mu(\d s_1) \dots \mu(\d s_n).
$$
The result can then be extended from rectangle sets of the form $F_0\times F_1\times\dots\times F_K$ to any set in the product $\sigma$-field of $X_0,X_1,\dots,X_K$ by the $\pi$-$\lambda$ Theorem \citep[Theorem A.1.4]{durrett2019probability}.

We first transform equation \eqref{jointDist} into a statement $P(\tilde X \in \tilde F)$ about the distribution of the base point process for some carefully chosen $\tilde F \subseteq \cN_\text{f}(\cS \times \{0,1,\dots,K\})$. For all $k=0,1,\dots,K$, we need to have $\tilde X \cap \{\cS\times \{k\}\} = S_k \times \{k\}$ for some finite set $S_k\in F_k$. Equivalently, it means that for
\begin{align}
    \tilde F := \bigg\{\bigcup_{k=0}^K S_k\times \{k\}:S_k \in F_k, k=0,1,\dots,K\bigg\}, \label{defTF}
\end{align}
we have $\tilde X \in \tilde F \iif X_0\in F_0, X_1 \in F_1,\dots,X_K\in F_k$.

By definition of $f$, we have
$$
P(\tilde X \in \tilde F) = \int \ind_{\tilde F}(\tilde S) f(\tilde S) \mu_\text{f}(\d \tilde S)
$$
which, for the partition $\tilde \cS_0 = \cS \times \{0\},\tilde \cS_1 =\cS \times \{1\},\dots,\tilde \cS_K =\cS \times \{K\}$, can be decomposed as
\begin{align}
    \int_{\cN_\text{f}(\tilde \cS_K)} \dots \int_{\cN_\text{f}(\tilde \cS_1)}\int_{\cN_\text{f}(\tilde \cS_0)}\binom{\sum_{k=0}^K N_k}{N_0,N_1,\dots,N_K}\ind_{\tilde F}\bigg(\bigcup_{k=0}^K \tilde S_k\bigg) f\bigg(\bigcup_{k=0}^K \tilde S_k\bigg) \tilde\mu_\text{f}(\d \tilde S_0)\tilde\mu_\text{f}(\d \tilde S_1)\dots \tilde\mu_\text{f}(\d \tilde S_K) \label{appLem}
\end{align}
according to Lemma \ref{lem1}. We can verify that an integral over the set of finite subsets of $\tilde\cS_k = \cS \times \{k\}$ for any fixed $k =0,1,\dots K$ can be expressed as an integral over $\cN_\text f(\cS)$ as
$$
\int_{\cN_\text{f}(\tilde\cS_k)} h(\tilde S_k) \tilde\mu_\text f(\d \tilde S_k) = \int h( S_k \times \{k\}) \mu_\text f(\d  S_k)
$$
for any integrable $h$, where $\mu_\text f$ is the counting-scattering measure over subsets of $\cS$. Applying this fact to equation \eqref{appLem}, we obtain
\begin{align}
P(\tilde X \in \tilde F) = \int\binom{\sum_{k=0}^K N_k}{N_0,N_1,\dots,N_K}\ind_{\tilde F}\bigg(\bigcup_{k=0}^K  S_k \times \{k\}\bigg) f\bigg(\bigcup_{k=0}^K  S_k \times \{k\}\bigg) \mu_\text{f}(\d  S_0)\mu_\text{f}(\d  S_1)\dots \mu_\text{f}(\d S_K).  \label{distRect}
\end{align}

Finally, per definition \eqref{defTF}, we have that $\bigcup_{k=0}^K  S_k \times \{k\}$ will be in $\tilde F$ iif $S_k\in F_k$ for all $k=0,1,\dots,K$. Substituting $\ind_{F_0\times F_1 \times \dots \times F_K}(S_0,S_1,\dots,S_K)$ for $\ind_{\tilde F}(\bigcup_{k=0}^K  S_k \times \{k\})$ in \eqref{distRect} completes the proof.

\end{proof}

\section{Validity of Algorithm \ref{SGCPalgo}}\label{valAlgo}

In this section, we demonstrate the equivalence between the marginal distribution of observed locations in $\cS$ implied by Algorithm \ref{SGCPalgo} and the Cox process of Definition \ref{SGCPdef}. The starting point is the density \eqref{jointSGCP} of $\tilde X_0,\tilde X_1$ which are respectively the thinned and observed point processes with values in $\cN_\text f(\cS\times \R)$. We will use $X_1$ to denote the observed locations without the Gaussian process marks.

We need to express the distribution $P(X_1 \in F)$ for some arbitrary $F\subseteq \cN_\text f(\cS)$ in terms of the joint distribution of $\tilde X_0,\tilde X_1$. The thinned locations can take any value but the observed locations along with GP marks need to be in a set $\tilde F$ comprised of any finite set $\{(x_1,g_1),\dots,(x_n,g_n)\}$ such that $\{x_1,\dots,x_n\}\in F$ and $(g_1,\dots,g_n)\in \R^n$. We are looking for
\begin{align}
P(X_1 \in F) &= P(\tilde X_0 \in \cN_\text f(\cS\times \R),\tilde X_1 \in \tilde F) \nonumber\\
    &=\sum_{n_1\geq0} \sum_{n_0\geq0} \int_{\cS^{n_1}}\int_{\cS^{n_0}} \ind_{ F}(\{(x_{1,1},\dots,x_{1,n_1}\})\frac{\exp(-\lambda|\cS|)\lambda^{n_0+n_1}}{n_0!n_1!}\nonumber\\
    &\qquad\qquad\qquad\qquad\qquad \int_{\R^{n_1}}\int_{\R^{n_0}} \cN(\tilde {\mathrm g}_0,\tilde {\mathrm g}_1|0,\Sigma(\tilde x_0,\tilde x_1))
\prod_{i=1}^{n_0} \{1-\sigma(\mathrm g_{0,i})\}\nonumber\\
&\qquad\qquad\qquad\qquad\qquad\qquad \prod_{j=1}^{n_1}\sigma(\mathrm g_{1,j}) \d\tilde {\mathrm g}_0 \d\tilde {\mathrm g}_1 \d \tilde x_0\d \tilde x_1. \label{grosseBertha}
\end{align}
Importantly, we used the fact that
$$
\ind_{\tilde F}(\{(x_{1,1},\mathrm g_{1,1}),\dots,(x_{1,n_1},\mathrm g_{1,n_1})\}) = \ind_{ F}(\{x_{1,1},\dots,x_{1,n_1}\})
$$
since, by definition of $\tilde F$, it only matters that locations $\{x_{1,1},\dots,x_{1,n_1}\}$ are in $F$. The GP marks can take any value. Using the defining property of Gaussian processes, we can replace the integral over $\R^{n_0+n_1}$ in \eqref{grosseBertha} by the expectation
$$
E\bigg[\prod_{i=1}^{n_0} \{1-\sigma( g(x_{0,i}))\}\prod_{j=1}^{n_1}\sigma(g(x_{1,j}))\bigg]
$$
with $g\sim\GP(0,C(\cdot,\cdot))$.

By Fubini’s theorem for positive functions, we can relegate this expectation to the outside and choose to first compute the integral over the thinned locations:
$$
\sum_{n_0\geq0} \int_{\cS^{n_0}} \frac{\lambda^{n_0}}{n_0!}\prod_{i=1}^{n_0} \{1-\sigma( g(x_{0,i}))\}\d \tilde x_0= \exp\Big(\lambda\int_\cS 1-\sigma(g(s))\d s\Big).
$$

Returning to equation \eqref{grosseBertha}, we are left with
$$
P(X_1 \in F) = E\bigg[ \sum_{n_1\geq 0} \int_{\cS^{n_1} } \ind_{ F}(\{(x_{1,1},\dots,x_{1,n_1}\})\frac{\exp(-\lambda\int \sigma(g(s))\d s)\lambda^{n_1}}{n_1!}\prod_{j=1}^{n_1}\sigma(g(x_{1,j}))\d \tilde x_1 \bigg].
$$
This is what we set out to prove as the above equation is simply the expectation over the GP function $g$ of the $ \PPP(\lambda \sigma (g(\cdot)))$ distribution. This is exactly the marginal implied by Definition \ref{SGCPdef}.

\section{Sampling the Thinned Process: Algorithms} \label{SampThinApp}

\textbf{\citet{adams2009tractable}:} The authors employ birth-death-move proposals to simulate the thinned point process ($\tilde X_0$) conditional on the observed point process ($\tilde X_1$). When a birth is proposed, a new location $x_\text{birth}$ in $\cS \times \R$ is simulated by first choosing a random point $x_\text{birth}$ uniformly in $\cS$. Its associated mark $\mathrm g_\text{birth}$ is then drawn from the Gaussian distribution conditioned on both thinned and observed marks $\tilde {\mathrm g}_0, \tilde {\mathrm g}_1$ at locations $\tilde x_0,\tilde x_1$. This move is accepted with probability
$$
\min \left(1, \frac{\lambda|\cS|}{n_0+1} \frac{1-b(n_0+1)}{b(n_0)} (1-\sigma(\mathrm g_\text{birth}))\right),
$$
where $b(n_0)$ is the probability that a birth is proposed if there are $n_0$ thinned locations at the current state of the Markov chain. When a death is proposed, a random point $(x_\text{death}, \mathrm g_\text{death})$ is chosen uniformly among the current thinned locations and its removal is accepted with probability
$$
\min \left(1, \frac{n_0}{\lambda|\cS|} \frac{b(n_0-1)}{1-b(n_0)} \frac{1}{1-\sigma(\mathrm g_\text{death})}\right).
$$
The locations can also be displaced. The acceptance rate in this case can be deduced from the regular, fixed dimension Metropolis-Hastings procedure.

\textbf{\citet{gonccalves2018exact}:} The authors attempt direct sampling by first simulating the number of locations and then simulating their position in $\cS$.  To do this, we need to derive both the PMF for the total number of locations and the scattering distributions for $\tilde X_0| \tilde X_1$: the point process with density
\begin{align*}
    f(\tilde S_0|\tilde S_1) = c \frac{\lambda^{n_0}}{n_0!} \cN(\tilde {\mathrm g}_0,\tilde {\mathrm g}_1|0,\Sigma(\tilde x_0,\tilde x_1))
\prod_{i=1}^{n_0} \{1-\sigma(\mathrm g_{0,i})\},
\end{align*}
where $c$ is a constant that does not depend on $\tilde S_0$. The scattering density over $(\cS \times \R)^{n_0}$ for a given $n_0$ is proportional to $\cN(\tilde {\mathrm g}_0,\tilde {\mathrm g}_1|0,\Sigma(\tilde x_0,\tilde x_1))
\prod_{i=1}^{n_0} \{1-\sigma(\mathrm g_{0,i})\}$ and can be simulated from by using a rejection sampling algorithm. However, the normalizing constant of this scattering density is a function of $n_0$. It means that the PMF for the total number $n_0$ of thinned locations is of the form
$$
p_{n_0} \propto \frac{\lambda^{n_0}}{n_0!} \int_{(\cS\times \R)^{n_0}} \cN(\tilde {\mathrm g}_0,\tilde {\mathrm g}_1|0,\Sigma(\tilde x_0,\tilde x_1))
\prod_{i=1}^{n_0} \{1-\sigma(\mathrm g_{0,i})\} \d \tilde x_0 \d \tilde {\mathrm g}_0,
$$
which is highly intractable.

\section{Conceptual Mistake in \citet[Section 7.1]{rao2017bayesian}}\label{concMist}

The thinning algorithm proposed by \citet[Section 7.1]{rao2017bayesian} (see Section \ref{retSamp}) does not simulate from the distribution (described by \eqref{condDens}) of thinned locations $\tilde X_0$ conditional on observed points $\tilde X_1$. To see this, we consider the simplest counterexample.

Consider the absurd but nonetheless valid situation where the observed point process is empty, i.e. $\tilde X_1 = \emptyset$. Under these circumstances, there are no observed GP values to condition upon in step 2 and therefore the marks are distributed in accordance with the random field $g \sim \GP(0,C(\cdot,\cdot))$. In that case, the conditional distribution of $\tilde X_0|\tilde X_1 = \emptyset$ would be the same as the marginal of $\tilde X_0$. We could ask: what is the probability of observing $\tilde X_0=\emptyset$ given that $\tilde X_1=\emptyset$? If the procedure outlined in \citet{rao2017bayesian} is valid, it would mean that
\begin{align}
f(\emptyset,\emptyset) = \int f(\emptyset,\tilde S_1) \mu_\text f(\d \tilde S_1) \int f(\tilde S_0,\emptyset) \mu_\text f(\d \tilde S_0). \label{falseClaim}
\end{align}
In this last expression, $f(\emptyset,\emptyset)$ is the joint density \eqref{jointSGCP} of $\tilde X_0, \tilde X_1$ evaluated at empty sets, and the integral terms are the marginal probabilities of being empty for the thinned and observed point processes under the SGCP. We demonstrate that equality \eqref{falseClaim} does not hold.

The RHS of equation \eqref{falseClaim} has the form
\begin{align*}
\sum_{n_0 \geq 0}\sum_{n_1 \geq 0}  \frac{\exp(-2\lambda |\cS|)\lambda^{n_0+n_1}}{n_0!n_1!} \int_{\cS^{n_0+n_1}} \int_{\R^{n_0+n_1}} &\cN(\tilde {\mathrm g}_0|0,\Sigma(\tilde x_0)) \cN(\tilde {\mathrm g}_1|0,\Sigma(\tilde x_1))\\
&\quad \prod_{i=1}^{n_0}\{1-\sigma(\mathrm g_{0,i})\} \prod_{j=1}^{n_1}\sigma(\mathrm g_{1,j}) \d \tilde {\mathrm g}_0 \d  \tilde {\mathrm g}_1 \d \tilde x_0  \d \tilde x_1.
\end{align*}
By the defining property of the GP $g$, this last expression is equal to
\begin{align*}
\sum_{n_0 \geq 0}\sum_{n_1 \geq 0}  \frac{\exp(-2\lambda |\cS|)\lambda^{n_0+n_1}}{n_0!n_1!} \int_{\cS^{n_0+n_1}}
E\bigg[\prod_{i=1}^{n_0}\{1-\sigma( g(x_{0,i}))\} \bigg] E\bigg[\prod_{j=1}^{n_1}\sigma(g(x_{1,j}))\bigg]   \d \tilde x_0  \d \tilde x_1.
\end{align*}

By Fubini's theorem for positive functions, we can pull the sum and integral over the thinned point process  inside the first expectation and obtain
\begin{align*}
    E\bigg[\sum_{n_0 \geq 0} \frac{\lambda^{n_0}}{n_0!}  \Big\{\int_\cS 1-\sigma( g(x)) \d x \Big\}^{n_0} \bigg] &=  E\bigg[\exp\Big(\lambda\int_\cS 1-\sigma( g(x)) \d x\Big)\bigg],
\end{align*}
where we used the series representation of the exponential function. Likewise, we can also put the integral and sum over the observed point process inside the second expectation:
\begin{align*}
    E\bigg[\sum_{n_1 \geq 0} \frac{\lambda^{n_1}}{n_1!}  \Big\{\int_\cS \sigma( g(x)) \d x \Big\}^{n_1} \bigg] &=  E\bigg[\exp\Big(\lambda\int_\cS \sigma( g(x)) \d x\Big)\bigg].
\end{align*}

Finally, we can bound the product of the two expectations below by employing Jensen's inequality on each one. By doing so, we obtain
$$
 \exp(-\lambda|\cS|) < \int f(\emptyset,\tilde S_1) \mu_\text f(\d \tilde S_1) \int f(\tilde S_0,\emptyset) \mu_\text f(\d \tilde S_0),
$$
where we used the linear property of expectations inside the exponential function. The LHS of the inequality above is exactly $f(\emptyset,\emptyset)$, and therefore this debunks equation \eqref{falseClaim}. This is in agreement with the intuition in Section \ref{retSamp}: the probability of the thinned locations being empty given that the observed point process is empty should be lower than under the marginal distribution of the thinned locations.

\section{Validity of Algorithm \ref{retroSampler}}\label{valRetroSampler}

Algorithm \ref{retroSampler} is a thinning algorithm that is almost identical to Algorithm \ref{SGCPalgo}: the original algorithm for exact simulations of the SGCP. The difference is that Gaussian process marks are instantiated conditional on the values of $\tilde {\mathrm g}_0,\tilde {\mathrm g}_1$ at respective locations $\tilde x_0,\tilde x_1$. This GP information corresponds to what is contained in the current state of the thinned ($\tilde X_0$) and observed ($\tilde X_1$) point processes.

The joint density (wrt to the product measure $\mu_\text f \times \mu_\text f$) of the new thinned and observed point processes resulting from Algorithm \ref{retroSampler} has a form that is very similar to \eqref{jointSGCP}:
\begin{align}
f'(\tilde S_0',\tilde S_1') = \frac{\exp(-\lambda|\cS|)\lambda^{n_0'+n_1'}}{n_0'!n_1'!} \frac{\cN(\tilde {\mathrm g}_0',\tilde {\mathrm g}_1',\tilde {\mathrm g}_0,\tilde {\mathrm g}_1|0,\Sigma(\tilde x_0',\tilde x_1',\tilde x_0,\tilde x_1))}{\cN(\tilde {\mathrm g}_0,\tilde {\mathrm g}_1|0,\Sigma(\tilde x_0,\tilde x_1))}
\prod_{i=1}^{n_0'} \{1-\sigma(\mathrm g_{0,i}')\}\prod_{j=1}^{n_1'}\sigma(\mathrm g_{1,j}'). \label{jointRetro}
\end{align}
where $\tilde S_k' = \{(x_{k,1}',\mathrm g_{k,1}'),\dots,(x_{k,n_k}',\mathrm g_{k,n_k}')\}$ is a finite subset of $\cS\times\R$ while $\tilde x_k' = (x_{k,1}',\dots,x_{k,n_k}')$ and $\tilde {\mathrm g}_k' = (\mathrm g_{k,1}',\dots,\mathrm g_{k,n_k}')$ are respectively the vectorized form of the location and mark components ($k=0,1$). Conditional on the current thinned point process, the update density is obtained by integrating out the observed coordinate $\tilde S_1'$ from \eqref{jointRetro}:
$$
q(\tilde S_0 \rightarrow \tilde S_0') = \int f'(\tilde S_0',\tilde S_1') \mu_\text f(\d \tilde S_1').
$$

From \eqref{condDens}, we have
\begin{align*}
f(\tilde S_0|\tilde S_1) q(\tilde S_0 \rightarrow \tilde S_0') \propto &\int\frac{\lambda^{n_0 + n_0'+ n_1'}}{n_0!n_0'!n_1'!} \cN(\tilde {\mathrm g}_0',\tilde {\mathrm g}_1',\tilde {\mathrm g}_0,\tilde {\mathrm g}_1|0,\Sigma(\tilde x_0',\tilde x_1',\tilde x_0,\tilde x_1)) \\
&\quad\prod_{i=1}^{n_0} \{1-\sigma(\mathrm g_{0,i})\} \prod_{i=1}^{n_0'} \{1-\sigma(\mathrm g_{0,i}')\}\prod_{j=1}^{n_1'}\sigma(\mathrm g_{1,j}')  \mu_\text f(\d \tilde S_1').
\end{align*}
The expression above is symmetric wrt $\tilde S_0$ and $\tilde S_0'$ and therefore the update is reversible wrt the conditional density \eqref{condDens}.

\section{Mat\'ern Type III Process}\label{MT3sect}

In their article, \citet{rao2017bayesian} showcase another use of data augmentation in the context of Bayesian inference for point processes. The model they discuss is the Mat\'ern type III process: one of the three repulsive processes described in \citet{matern1960spatial}. In their basic form, the Mat\'ern processes do not allow points to fall within a distance $R$ of one another which leads to regular or underdispersed realizations. Each one is described by a thinning procedure applied to a base PPP. We focus on the type III process as it has a tractable distribution for the thinned locations conditional on the observed points. This is one of the main results in \citet{rao2017bayesian}. We re-derive it here using the theory developed in Section \ref{ppsect} to illustrate how the colouring theorem streamlines most of the intricate details involved.

The thinning procedure involved in generating a Mat\'ern type III process proceeds as follows. First, we simulate a homogeneous PPP of intensity $\lambda>0$ on $\cS$. Then, each point is independently assigned a time mark from the uniform distribution on $[0,1]$. In the product space representation, this amounts to a homogeneous PPP on $\cS\times[0,1]$ with intensity $\lambda$. Finally, browsing through the points in order of time, each point is thinned if it lies within a radius $R$ of an earlier non-deleted point. The base point process is Poisson, but the locations are not independently thinned. We use the general version of the colouring theorem presented in Section \ref{Col}.

The fact that $\cS$ is bounded entails edge effects in the Mat\'ern type III process. Indeed, those points near the border have less space around them and are therefore overall less likely to be thinned. In practice, we observe points in an arbitrary bounded window contained in a larger, conceptually unbounded, space where the process still unfolds. There is no magical border that would prevent a point in $\cS$ from being thinned by a location outside of it. In the case of the Mat\'ern type III process, there is no easy fix to this phenomenon as the distance at which points can interact is unbounded. For example, a point near the border of $\cS$ could be thinned by an outside point within distance $R$ of it, but only if this point had not been thinned itself by another within distance $R$ and so on. Perfect simulation algorithms (without edge effects) of the Mat\'ern type III process and other details are discussed in \citet{moller2010perfect}. We focus on the simplistic case as it is set up in \citet{huber2009likelihood} and \citet{rao2017bayesian} to demonstrate the results therein as particular applications of the colouring theorem.

We proceed exactly as in Section \ref{SGCP} for the SGCP. This approach applies in general to any thinning procedure over any FPP with a known density including the other two types of Mat\'ern processes. We start by deriving the density of the base, $\{0,1\}$ marked point process from the generative procedure. From it, we obtain the joint density for the thinned and observed point processes by applying Theorem \ref{colThm}. Once we have the proper joint density, we can discuss conditional and marginal densities without any ambiguity by using standard definitions. The generative procedure of the discretely marked point process corresponding to Mat\'ern type III thinning can be condensed into the following two steps.
\begin{enumerate}
    \item Simulate a homogeneous PPP $\tilde X$ with intensity $\lambda$ over $\cS \times [0,1]$,
    \item Cycling through all the points $(s,t)\in \tilde X$ in order of time, assign label $c=0$ (thinned) to $(s,t)$ if it lies within distance $R$ of an earlier point that was not thinned. Otherwise, $c=1$ (observed).
\end{enumerate}

The density of the (unitype) FPP over $\cS \times [0,1] \times \{0,1\}$ implied by the above procedure is the product of a homogenous PPP density with the proper scattering distribution $\pi_n(c_1,\dots,c_n|s_1,t_1,\dots,s_n,t_n)$ for some $n\geq 1$ with $(c_1,\dots,c_n) \in \{0,1\}^n$. This latter function accounts for how points in $\cS \times [0,1]$ are assigned the label 0 (thinned) or 1 (observed). Those are allocated sequentially in order of time in a deterministic fashion. Consider the permutation $\sigma$ in which times are sorted in strictly ascending order, that is $t_{\sigma(1)} < t_{\sigma(2)} < \dots < t_{\sigma(n)}$. Browsing through the points in $\sigma$ order, the $i^{\text{th}}$ point is kept ($c_{\sigma(i)}=1$) if
$$
\prod_{j=1}^{i-1} \{1-\ind(||s_{\sigma(i)}-s_{\sigma(j)}||< R)\}^{c_{\sigma(j)}} =1,
$$
where the exponent $c_{\sigma(j)}$ accounts for the fact that a point can only be thinned by previous observed points (we use $0^0=1$ as a convention). The label scattering distribution can thus be factored in this $\sigma$ order as
\begin{align}
\prod_{i=1}^n \bigg(&\bigg[1-\prod_{j=1}^{i-1} \{1-\ind(||s_{\sigma(i)}-s_{\sigma(j)}||< R)\}^{c_{\sigma(j)}}\bigg]^{1-c_{\sigma(i)}}\nonumber\\ &\qquad\qquad\bigg[\prod_{j=1}^{i-1} \{1-\ind(||s_{\sigma(i)}-s_{\sigma(j)}||< R)\}^{c_{\sigma(j)}}\bigg]^{c_{\sigma(i)}}\bigg). \label{labScatOrd}
\end{align}
The value of this density whenever such an order $\sigma$ cannot be defined (in the case of ties) is not important because the dominating measure allocates zero mass to such cases.

Defining the label scattering distribution in terms of ascending time order is intuitive given the particular generative procedure of the Mat\'ern type III process. However, working with the global ordering $\sigma$ is impeding when our goal is to discuss the thinned and observed point process separately. We aim for an equivalent form of equation \eqref{labScatOrd} that can be evaluated in any order. We follow \citet{rao2017bayesian} in defining the shadow $\cH(s,t,s^*,t^*)$ of a location $(s^*,t^*) \in \cS \times [0,1]$ at another location $(s,t)$ as the function that takes value 1 if $(s,t)$ would be thinned by $(s^*,t^*)$ and 0 otherwise:
\begin{align}
\cH(s,t,s^*,t^*) = \ind(t>t^*)\ind(||s-s^*||<R). \label{shadow}
\end{align}
Note that a point does not lie in its own shadow. Using this higher level concept, we can now let the innermost products in expression \eqref{labScatOrd} run over all points (instead of only those that precede $i$ in $\sigma$ order) and let the first indicator function in \eqref{shadow} ensure that latter points do not contribute in determining the thinning configuration of earlier points:
\begin{align}
\prod_{i=1}^n \bigg(\bigg[1-\prod_{j=1}^{n} \{1-\cH(s_i,t_i,s_j,t_j)\}^{c_j}\bigg]^{1-c_i}\bigg[\prod_{j=1}^{n} \{1-\cH(s_i,t_i,s_j,t_j)\}^{c_j}\bigg]^{c_i}\bigg). \label{labScat}
\end{align}
Since the thinning indicators are now evaluated for every pair of points, it does not matter in which order we compute each product. That is why we removed any mention of $\sigma$.

As it stands, the procedure for thinning the homogeneous PPP on $\cS \times [0,1]$ operates in a deterministic fashion: we can rule exactly which points are to be labeled as thinned or observed from the locations and times. One of the most interesting contributions of \citet{rao2017bayesian} is to extend the Mat\'ern Type III process to probabilistic thinning. This can be achieved by replacing the second indicator in \eqref{shadow} by a $[0,1]$ valued kernel $K(\cdot;s^*)$ centered at $s^*$. Location $(s,t)$ shall be thinned by $(s^*,t^*)$ with probability
\begin{align*}
\cH(s,t,s^*,t^*) = \ind(t>t^*)K(s;s^*).
\end{align*}
Now each previously unthinned location $s^*$ has a chance to thin a point at $s$, but there is no guarantee even if the two are very close.

We can consider the deterministic version as a special case of the probabilistic one. This generalization does not change the form of the label scattering distribution in \eqref{labScat} and every result in what follows can be expressed in terms of the $[0,1]$ valued shadow function $\cH$, however we choose to define it. Various other extensions can also readily be handled such as location-dependent and/or random radius $R$, but we stay clear of those for simplicity.

In any case, having obtained the expression for the scattering PMF of the thinned ($c=0$) and observed ($c=1$) labels (conditional on locations and times), we can now write the overall density of the base, $\{0,1\}$ marked point process as
\begin{align*}
  f(\{(x_1,t_1, c_1),\dots,(x_n, t_n, c_n)\}) &= p_n \pi_n(x_1,\dots,x_n)\pi_n(t_1,\dots,t_n)\\&\qquad\qquad\qquad\qquad\qquad \pi_n(c_1,\dots,c_n|x_1,t_1,\dots,x_n,t_n)\\
  &= \frac{\exp(-\lambda|\cS|)\lambda^n}{n!} \prod_{i=1}^n\bigg( \bigg[1-\prod_{j=1}^{n} \{1-\cH(s_i,t_i,s_j,t_j)\}^{c_j}\bigg]^{1-c_i}\\
  &\qquad\qquad\qquad\qquad\qquad\qquad\bigg[\prod_{j=1}^{n} \{1-\cH(s_i,t_i,s_j,t_j)\}^{c_j}\bigg]^{c_i}\bigg).
\end{align*}
This is the density of the point process consisting of locations, times and thinning labels.

Now, let $\tilde X_0$ represent the thinned physical locations along with time stamps and let $\tilde X_1$ be its observed counterpart. Together, they form a multitype point process over $\cS \times [0,1]$. From Theorem \ref{colThm}, the joint density of $\tilde X_0, \tilde X_1$ is the density of the marked point process evaluated at the proper thinning labels times a combinatorial factor:
\begin{align}
    f(\tilde S_0, \tilde S_1) &= \frac{\exp(-\lambda |\cS|)\lambda^{n_0+n_1}}{n_0!n_1!}\prod_{i=1}^{n_0}\bigg[1-\prod_{j=1}^{n_1} \{1-\cH(s_{0,i},t_{0,i},s_{1,j},t_{1,j})\}\bigg]\nonumber\\
    &\qquad\qquad\qquad\qquad\qquad\prod_{i=1}^{n_1}\prod_{j=1}^{n_1} \{1-\cH(s_{1,i},t_{1,i},s_{1,j},t_{1,j})\}\nonumber\\
    &\equiv \frac{\exp(-\lambda |\cS|)\lambda^{n_0+n_1}}{n_0!n_1!}\prod_{i=1}^{n_0}h(s_{0,i},t_{0,i};\tilde S_1)\prod_{i=1}^{n_1}\{1-h(s_{1,i},t_{1,i};\tilde S_1)\},\label{jointMT3}
\end{align}
where $\tilde S_k = \{(s_{k,1},t_{k,1}) \dots (s_{k,n_k},t_{k,n_k})\}$ are finite subsets of $\cS \times [0,1]$ ($k=0,1$). In the last equality, we introduce the compact notation $h(s,t;\tilde S_1) = 1-\prod_{j=1}^{n_1} \{1-\cH(s,t,s_{1,j},t_{1,j})\}$. This function maps each location in $\cS \times [0,1]$ to a probability in $[0,1]$ and can be understood as the shadow of all observed locations taken together.

We can compare this joint density with the marginal for the observed points. The likelihood of the observed locations and times was first derived in \citet[Theorem 2.1]{huber2009likelihood} for the Mat\'ern type III process. In their article, they use an accept-reject construction of this process to demonstrate this theorem. We can recover their result by integrating out the thinned point process from the joint density using standard calculations:
\begin{align}
    f(\tilde S_1) &= \int_{\cN_{\text f}(\cS\times [0,1])} f(\tilde S_0,\tilde S_1) \mu_{\text f}(\d \tilde S_0) \nonumber \\
    &= \sum_{n_0\geq 0} \int_{\{\cS\times [0,1]\}^{n_0}} f(\{(s_1,t_1),\dots,(s_{n_0},t_{n_0})\},\tilde S_1)\,\d s_1 \d t_1 \dots \d s_{n_0} \d t_{n_0}\nonumber\\
    &= \frac{\exp(-\lambda |\cS|)\lambda^{n_1}}{n_1!}\prod_{i=1}^{n_1}\{1-h(s_{1,i},t_{1,i};\tilde S_1)\}\sum_{n_0\geq 0} \frac{\lambda^{n_0}}{n_0!} \Big\{\int_{\cS\times [0,1]} h(s,t;\tilde S_1)\, \d s \d t\Big\}^{n_0} \nonumber\\
    &= \frac{\exp(-\lambda \int\{ 1 - h(s,t;\tilde S_1)\}\, \d s \d t)\lambda^{n_1}}{n_1!}\prod_{i=1}^{n_1}\{1-h(s_{1,i},t_{1,i};\tilde S_1)\}. \label{margMT3}
\end{align}
In the last equality, we use the power series representation of the exponential and the fact that $|\cS| = \int \,\d s\d t$. This is reminiscent of a PPP density, but it is not the case as $h(\cdot,\cdot;\tilde S_1)$ is a function of each and every location while PPPs have independent scattering. Just like the PPP density however, this last expression \eqref{margMT3} involves the integral of the shadow function which itself depends on model parameters such as the radius of thinning $R$ in the deterministic version. In contrast, the joint distribution \eqref{jointMT3} is better conditioned for inference.

Now, let us assume that time marks are available at the observed locations in $\cS$ and we wish to instantiate the thinned point process $\tilde X_0$ conditional on $\tilde X_1$ in a data augmentation step. This could be useful for inference schemes where other quantities of interest can be updated elsewhere such as a Gibbs sampling algorithm. In the case of the Mat\'ern type III process, this conditional distribution corresponds to that of a PPP. Indeed, with the observed locations fixed, the thinned locations do not interact with one another as noted in \citet{rao2017bayesian}. This fact can be verified by noting that this conditional density is proportional (as a function of $\tilde S_0$) to the joint of $\tilde X_0,\tilde X_1$:
\begin{align}
    f(\tilde S_0|\tilde S_1) \propto f(\tilde S_0,\tilde S_1) \propto \frac{\lambda^{n_0}}{n_0!}\prod_{i=1}^{n_0}h(s_{0,i},t_{0,i};\tilde S_1). \label{condMT}
\end{align}
With $\tilde S_1$ fixed, we can see that the density in \eqref{condMT} is, up to a normalizing constant, that of a non-homogeneous PPP over $\cS \times [0,1]$  with intensity $\lambda h(s,t;\tilde S_1)$ (recall the form of the PPP density in equation \eqref{PPPCSdens}). One can use a Metropolis-Hastings-based algorithm to sample from this conditional, but a perfect simulation of \eqref{condMT} can be obtained more efficiently by using the thinning procedure of \citet{lewis1979simulation}.

The same result is obtained by using the standard definition $f(\tilde S_0|\tilde S_1) = f(\tilde S_0,\tilde S_1)/f(\tilde S_1)$ which is how \cite{rao2017bayesian} reach the same conclusion. To them however, the joint density of $\tilde S_0, \tilde S_1$ is the same thing as the density of the base, $\{0,1\}$ marked point process. However, this is only true for the particular choice of dominating measure they use (see Remark \ref{rem}) and it is unclear whether the authors are aware of that. For example, the proof they provide would have led to an incoherent result had they used the unit rate PPP distribution as a dominating measure. Nevertheless, the approach we showcase here of working with the joint density gives a principled way of discussing conditional and marginal densities.

\section{Inter-Process Dependence Structure} \label{depsSGCP}

This section explores the pairwise dependence structures that can be induced by the multitype SGCP model introduced in Section \ref{MTSGCP}. We present, for two sets of GP parameter values, the resulting cross pair correlation function. In each case, we showcase how this dependence structure translates in a realization of the point processes and how the MCMC procedure we provide is able to capture this behavior. The relationship between the pair correlation function and the underlying vector valued GP $ g(\cdot)$ is given by
\begin{align}
\gamma_{k\ell}(| s -  t|) = \frac{E[\sigma_k( g( s)) \sigma_\ell ( g( t))]}{E[ \sigma_k( g( s))]E[\sigma_\ell ( g( t))]}, \qquad k,\ell=1,2,\dots,K,  \label{pcf_ratio}
\end{align}
where the transformations are given by 
\begin{align}
\sigma_k( \mathrm g) = P\left(\left\{\argmax ( \mathrm g +  Z) = k \right\} \cap \left\{\max ( \mathrm g +  Z) > 0\right\}\right) \label{transfo}
\end{align}
for a random variable $ Z \sim \cN(0, I)$. The PCF in \eqref{pcf_ratio} is dependent on the GP covariance parameters $ A,\rho_{j=1}^K$ and the GP mean levels $\mu_{j=1}^K$. It is invariant with respect to the base intensity level $\lambda$. A PCF $\gamma_{k\ell}(|\cdot|) > 1$ implies an attractive interaction between processes $k$ and $\ell$ for $k \neq \ell$ whereas $\gamma_{k\ell}(|\cdot|) < 1$ corresponds to a repulsive relationship. Both cases relate directly to the cross covariance function of the positive random fields $\sigma_k( g(\cdot))$ and $\sigma_\ell( g(\cdot))$ with positive (resp. negative) values corresponding to an attractive (resp. repulsive) interaction among processes $k$ and $\ell$.

If we consider the multivariate GP $ g(\cdot)$ to have high mean parameters $\mu_{j=1}^K$, then from equation \eqref{transfo} we have $\sigma_k( \mathrm g) \approx P\left(\argmax ( \mathrm g +  Z) = k\right)$. In that context, there is a sort of competitive interplay between the random fields $\sigma_k( g(\cdot))$ and $\sigma_\ell( g(\cdot))$ and they might exhibit negative dependence even if the individual components $g_k(\cdot)$ and $g_\ell(\cdot)$ are positively correlated. On the other hand, if the mean levels $\mu_{j=1}^K$ are lower, then we have $\sigma_k( \mathrm g) \approx P\left(\mathrm g_k + Z_k > 0\right)$ and in that case the covariance among processes $\sigma_k( g(\cdot))$ and $\sigma_\ell( g(\cdot))$ better reflects the dependence among the underlying processes $g_k(\cdot)$ and $g_\ell(\cdot)$. In other words, lower mean parameters for the $K$-dimensional GP $ g(\cdot)$ leads to a more flexible pairwise dependence structure among the modeled point processes. 

We study two examples that correspond to the attractive and repulsive inter-dependence structures described above. In both cases, we set an equal mean level for the two processes $\mu_1 = \mu_2 = -1$ and employ the same spatial range parameters $\rho_1 = 5$ and $\rho_2 = 10$. In the first example (illustrated in Figure \ref{posStruct}), we induce a positive cross-correlation structure among processes $g_1(\cdot)$ and $g_2(\cdot)$ by employing the coregionalization matrix $ A = [[1,0.5]^\top,[0.5,1]^\top]$ when simulating the multitype SGCP. In the second example (Figure \ref{negStruct}), we showcase a negative dependence structure with $ A = [[1,-0.5]^\top,[0.5,-1]^\top]$. We run the MCMC algorithm described in Section \ref{bayInf} for 10,000 iterations (discarding the first 2,000) for both illustrated realizations and present the 90\% credible intervals around the true cross pair correlation function.

\begin{figure}[h!]
     \centering
     \begin{subfigure}[b]{0.39\textwidth}
         \centering
         \includegraphics[height=2.5in]{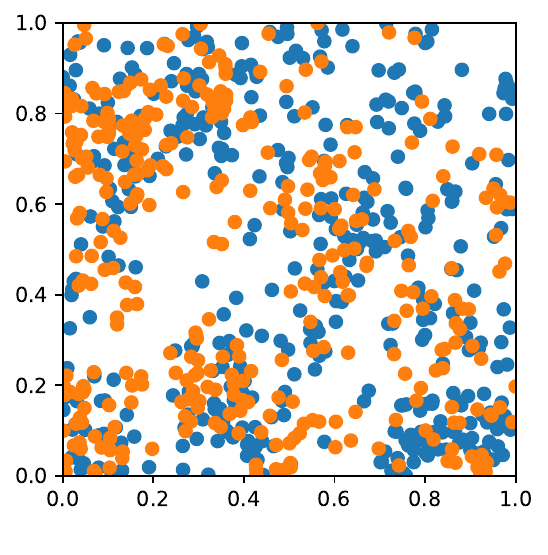}

     \end{subfigure}
    \begin{subfigure}[b]{0.59\textwidth}
        
        \centering
        \includegraphics[height=2.46in]{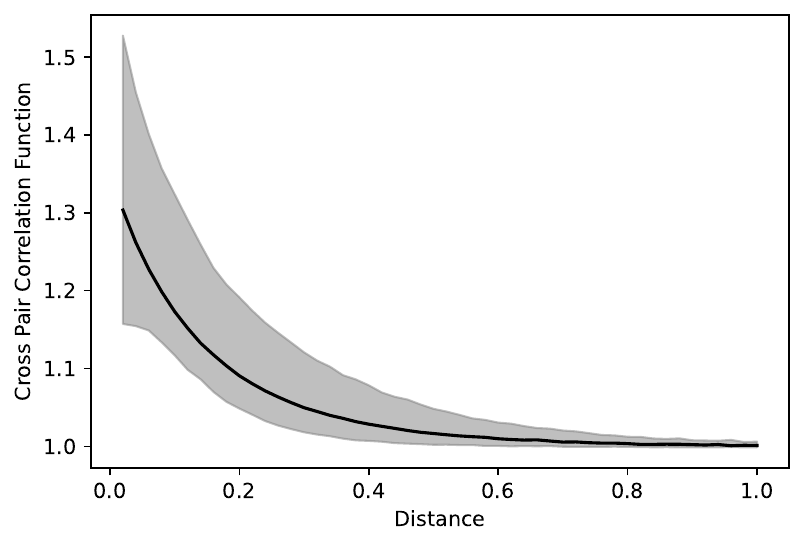}

    \end{subfigure}
        \caption{Positively Correlated Point Processes. On the left: We observe that both processes cluster in the same areas. On the right: The true cross PCF is illustrated in black and the shaded area in gray corresponds to the pointwise 90\% credible intervals computed from posterior samples of the model parameters.}
        \label{posStruct}
\end{figure}

\begin{figure}[h!]
     \centering
     \begin{subfigure}[b]{0.39\textwidth}
         \centering
         \includegraphics[height=2.5in]{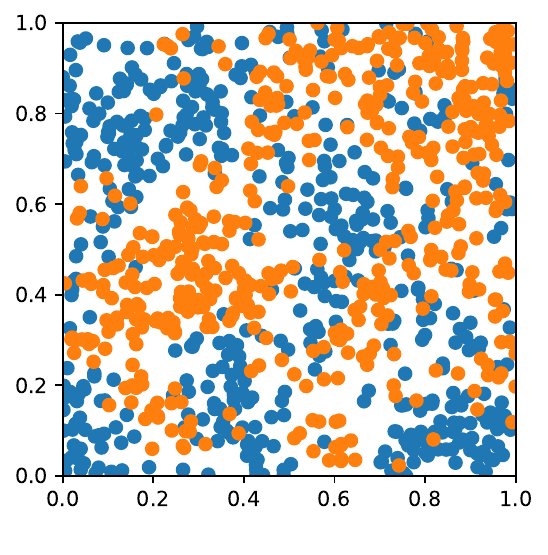}

     \end{subfigure}
    \begin{subfigure}[b]{0.59\textwidth}
        
        \centering
        \includegraphics[height=2.46in]{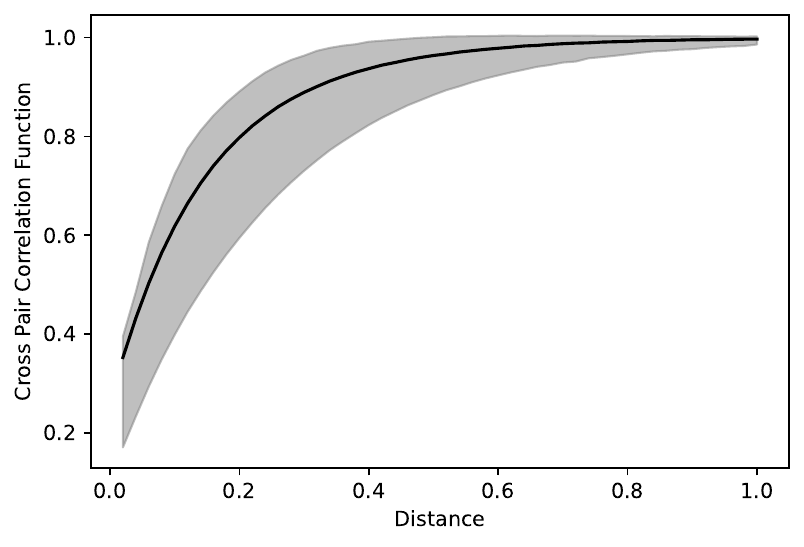}

    \end{subfigure}
        \caption{Negatively Correlated Point Processes. On the left: We observe that both processes cluster in different areas. On the right: The true cross PCF is illustrated in black and the shaded area in gray corresponds to the pointwise 90\% credible intervals computed from posterior samples of the model parameters.}
        \label{negStruct}
\end{figure}

\end{document}